\begin{document}
\newtheorem{theorem}{Theorem}[section]
\newtheorem{lemma}[theorem]{Lemma}
\newtheorem{corollary}[theorem]{Corollary}
\newtheorem{fact}[theorem]{Fact}
\newtheorem{claim}{Claim}
\newtheorem*{claim*}{Claim}
\newtheorem*{conjecture*}{Conjecture}
\theoremstyle{definition}
\newtheorem{definition}[theorem]{Definition}
\newtheorem{assumption}[theorem]{Assumption}

\newcommand{\bits}{\{0,1\}}
\newcommand{\poly}{\mathrm{poly}}

\newcommand{\evnote}[1]{\marginpar{#1}}
\newcommand{\evnotelong}[1]{\textbf{Note:} #1 \textbf{End note.}}

\newcommand{\set}[1]{\left \{{#1} \right \}}
\newcommand{\rob}[1]{\left( #1 \right)}
\newcommand{\sqb}[1]{\left[ #1 \right]}
\newcommand{\cub}[1]{\left\{ #1 \right\} }
\newcommand{\rb}[1]{\left( #1 \right)} 
\newcommand{\abs}[1]{\left| #1 \right|} 
\newcommand{\zo}{\{0, 1\}}
\newcommand{\zonzo}{\zo^n \to \zo}
\newcommand{\zon}{\zo^n}
\newcommand{\e}{\epsilon}
\newcommand{\eps}{\epsilon}
\newcommand{\nat}{\mathbb{N}}

\newcommand{\acz}{\mathrm{AC}^0}
\newcommand{\accz}{\mathrm{ACC}^0}
\newcommand{\nc}{\mathrm{NC}}
\newcommand{\tcz}{\mathrm{TC}^0}
\newcommand{\aczp}{\mathrm{AC}^0[\oplus]}
\newcommand{\ncz}{\mathrm{NC}^0}

\newcommand{\conj}{\textsf{Conj}}
\newcommand{\comm}{\textsf{Comm}}
\newcommand{\id}{\textsf{id}}
\newcommand{\comab}{\alpha\beta\alpha^{-1}\beta^{-1}}
\newcommand{\chat}{\widehat{C}}
\newcommand{\gtot}{G^{\, t}}

\newcommand*{\TitleFont}{%
      \fontsize{19}{15}%
      \selectfont}


\title{\TitleFont Iterated group products and leakage resilience against NC$^1$}
\author{Eric Miles\let\thefootnote\relax\thanks{Supported by NSF grants CCF-0845003 and CCF-1319206. Email: \texttt{enmiles@ccs.neu.edu}.}}
\date{December 3, 2013}
\maketitle

\begin{abstract}
We show that if NC$^1 \neq$ L, then for every element $\alpha$ of the alternating group $A_t$, circuits of depth $O(\log t)$ cannot distinguish between a uniform vector over $(A_t)^t$ with product $= \alpha$ and one with product $=$ identity. Combined with a recent construction by the author and Viola in the setting of leakage-resilient cryptography [STOC '13], this gives a compiler that produces circuits withstanding leakage from NC$^1$ (assuming NC$^1 \neq$ L). For context, leakage from NC$^1$ breaks nearly all previous constructions, and security against leakage from P is impossible.

We build on work by Cook and McKenzie [J.\ Algorithms '87] establishing the relationship between L $=$ logarithmic space and the symmetric group $S_t$. Our techniques include a novel algorithmic use of commutators to manipulate the cycle structure of permutations in $A_t$.
\end{abstract}


\section{Introduction}
The interplay between {\em group theory} and {\em computational complexity} has been the source of a number of elegant constructions and computational insights.
A line of work \cite{Barrington89, CookM87, BarringtonT88, Ben-OrC92, CaiL94, Cleve91,ImmermanL95} in the late 1980s gave characterizations of various complexity classes in terms of products over finite groups. A primary motivation of some of this work was to obtain an efficient simulation of circuits by branching programs, such as in the celebrated theorem of Barrington \cite{Barrington89}. 
Beyond this however, the underlying encoding of computation by group products has proven to be a useful tool in other areas.

A typical encoding has the following form. Let $G$ be a group with identity element $\id$, let $f$ be a Boolean function, and let $x$ be an input to $f$. Then the encoding $E(f,x)$ is a vector $(g_1,\ldots,g_t) \in \gtot$ such that $\prod_i g_i = \id \Leftrightarrow f(x) = 0$. For such encodings a number of efficiency aspects are of interest, including the size of the group, the length of the encoding, and the complexity of computing it. In some cases a short encoding can be computed with much fewer resources than are needed to compute $f$, yielding the result that the decoding problem, i.e.\ computing iterated products over $G$, is as hard as computing $f$.

Recently, the author and Viola \cite{MilesV-leak} gave an application of these group encodings to {\em leakage-resilient cryptography}. This area studies cryptographic models in which the adversary obtains more information from an algorithm than just its input/output behavior. The extra information is commonly modeled by providing the adversary with the output of a computationally restricted ``leakage function'' of its choosing, applied to (the bits carried on) the wires of a circuit implementing the algorithm.
A general goal in this area is to compile any circuit $C$ into a functionally equivalent circuit $\chat$ such that any attack on $\chat$ exploiting this extra information can in fact be carried out just using input/output access, and hence does not succeed under standard hardness assumptions. 



The \cite{MilesV-leak} construction provides a rather generic way to construct such a compiler, given any group $G$ satisfying the following property with respect to a set of leakage functions $\mathcal{L}$. This property says that for an element $\alpha \in G$, functions in $\mathcal{L}$ cannot distinguish between vectors with product $= \alpha$ and those with product $= \id$. 

\begin{definition}
\label{hard-def}
Let $G$ be a group. For $\alpha \in G$ and $t \in \mathbb{N}$, the {\em $\alpha$-product problem over $\gtot$} is to decide, given $(x_1,\ldots,x_t) \in \gtot$ such that $\prod_i x_i \in \{\alpha,\id\}$, which product it has.

Let $D_\alpha$ denote the uniform distribution over $\{(x_1,\ldots,x_t) \in \gtot\ |\ \prod_i x_i = \alpha\}$. A set of functions $\mathcal{L}$ is {\em $(\eps,\alpha)$-fooled by $\gtot$} if $\Delta(\ell(D_\alpha),\ell(D_\id)) \leq \eps$ for every $\ell \in \mathcal{L}$. 
\end{definition}	

The \cite{MilesV-leak} construction obtains security against leakage classes that are $(\eps,\alpha)$-fooled by $\gtot$ for {\em every} $\alpha \in G$. A key technical contribution there is to show that a number of well-studied classes of functions have this property when $G = A_5$, where $A_u$ denotes the group of even permutations on $u$ points.
In particular they show that for every $\alpha \in A_5$, functions including parity, majority, inner product, and communication protocols are $(t^{-\omega(1)},\alpha)$-fooled by $(A_5)^t$, yielding a compiler that provides security against leakage from these functions. 

They also make the following conjecture, whose proof would yield a compiler that provides security against leakage from the class NC$^1$ of log-depth, fan-in-2 circuits.

\begin{conjecture*}[\cite{MilesV-leak}]
If {\em NC}$^1 \neq$ {\em L} then for every $\alpha \in A_t$, {\em NC}$^1$ is $(t^{-\omega(1)},\alpha)$-fooled by $(A_t)^t$.
\end{conjecture*}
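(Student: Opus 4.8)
The plan is to show the contrapositive-style statement: if $\mathrm{NC}^1$ is *not* $(t^{-\omega(1)},\alpha)$-fooled by $(A_t)^t$ for some $\alpha$, then in fact $\mathrm{NC}^1 = \mathrm{L}$. So suppose there is a sequence of log-depth circuits $\{\ell_t\}$ and an $\alpha = \alpha_t \in A_t$ such that $\Delta(\ell_t(D_{\alpha_t}),\ell_t(D_{\id})) \geq t^{-c}$ for infinitely many $t$. I want to use this distinguishing advantage as an oracle to solve an $\mathrm{L}$-complete problem in $\mathrm{NC}^1$. The natural $\mathrm{L}$-complete problem to target is directed $s$-$t$ connectivity, or equivalently — following Cook–McKenzie \cite{CookM87} — the problem of computing iterated products in the symmetric group $S_t$, which I would reduce to iterated products in $A_m$ for $m = \poly(t)$ by a standard embedding $S_t \hookrightarrow A_{2t}$ (or $A_{t+2}$). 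So it suffices to show: given the $\alpha$-distinguisher for $A_t$, one can compute iterated products over $A_t$ in $\mathrm{NC}^1$.

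The key idea is \emph{worst-case to average-case self-reducibility of iterated group products combined with a random self-reduction that forces the product to lie in $\{\alpha,\id\}$}. Concretely, given an instance $(g_1,\ldots,g_t) \in (A_t)^t$ whose product $\pi$ we wish to identify, first randomize it: replace $g_i$ by $h_{i-1} g_i h_i^{-1}$ for uniform $h_1,\ldots,h_{t-1}$ and $h_0 = h_t = \id$, so the product is still $\pi$ but each factor is now (marginally) uniform. To decide whether $\pi = \id$, I would like to present the distinguisher with a sample from $D_\pi$ when $\pi \neq \id$ and from $D_{\id}$ when $\pi = \id$; the obstacle is that the distinguisher only works for the one fixed target $\alpha$. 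This is where the \textbf{novel algorithmic use of commutators} advertised in the abstract must come in: since $A_t$ is (for $t \geq 5$) a non-abelian simple group, any nontrivial $\pi$ and the fixed $\alpha$ are conjugate into each other's normal closure, and more usefully one can write $\alpha$ as a short product of conjugates of $\pi$ and $\pi^{-1}$ — in fact a bounded-length such expression, with the conjugating elements computable from the cycle type. So the plan is: in $\mathrm{NC}^1$, compute the cycle type of $\pi$ (note: this is the crux — see below), use it to build an $\mathrm{NC}^0$/low-depth "amplifier" circuit that transforms a vector with product $\pi$ into one with product $\alpha$ (when $\pi$ has the "right" cycle type) by inserting the appropriate conjugating permutations, handle the remaining cycle types by an analogous reduction, re-randomize to get a genuine sample from $D_\alpha$ versus $D_{\id}$, feed it to $\ell_t$, and amplify the $t^{-c}$ advantage to constant by $\poly(t)$ independent repetitions and a threshold — a majority on $\poly(t)$ bits is in $\mathrm{NC}^1$. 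Iterating this decision procedure (first decide whether $\pi=\id$; if not, use the identification to recurse on sub-products via the standard binary-tree/bisection argument) pins down $\pi$ exactly, all in log depth.

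The main obstacle, as flagged, is the step "compute the cycle type of $\pi$, and more generally reduce an arbitrary product to one with a canonical structure, inside $\mathrm{NC}^1$." A priori determining even whether the product permutation is the identity is exactly as hard as the $\mathrm{L}$-complete problem we are trying to solve — so one cannot simply compute the cycle type first and then decide. The resolution must be more delicate: one has to design a single low-depth reduction that, \emph{without knowing} the cycle type, maps $D_\pi \to D_{\alpha}$ whenever $\pi \ne \id$ while mapping $D_{\id} \to D_{\id}$. I expect this to require a clever fixed gadget — for instance, tensoring/padding the instance so that the product lives in a larger $A_m$ where one can append a fixed permutation whose commutator with an arbitrary element realizes a normalized cycle structure, exploiting that the conjugacy class of $\comm$-type elements is robust. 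Getting the commutator manipulations to be \emph{oblivious} to the input permutation, and simultaneously preserving the uniformity of the distribution (so the output is exactly $D_\alpha$ or exactly $D_{\id}$, not merely close), is the technical heart; the amplification and the outer reduction to $\mathrm{L}$-completeness via Cook–McKenzie are comparatively routine.
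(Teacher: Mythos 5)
Your high-level framing is on target: take the contrapositive, use the distinguisher as an oracle, connect to Cook--McKenzie's L-completeness of iterated permutation products, embed $S_t \hookrightarrow A_{t+2}$, randomize the instance via the standard re-randomization $g_i \mapsto h_{i-1}g_i h_i^{-1}$, use commutators to normalize the cycle structure, and amplify by majority. And you correctly isolate the crux: one cannot first compute the cycle type of $\pi = \prod_i g_i$ in NC$^1$, since that is the very problem at hand.

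However, your proposed resolution of this crux is where the gap lies, and it is a real gap, not a detail. You suggest building ``a single low-depth reduction that, \emph{without knowing} the cycle type, maps $D_\pi \to D_\alpha$ whenever $\pi \neq \id$ while mapping $D_{\id} \to D_{\id}$,'' via some ``clever fixed gadget'' that is oblivious to $\pi$. Such an object would be a \emph{many-one} reduction from the ``is the product $\id$?'' problem to the $\alpha$-product problem. The paper explicitly does \emph{not} know how to construct one, and in fact highlights finding a many-one reduction as an open problem (it would improve the leakage bounds). No fixed commutation/conjugation recipe applied uniformly to an arbitrary $\pi$ can produce a single target conjugacy class: the required conjugating and commutating elements genuinely depend on $\pi$'s cycle type, which is invisible to the circuit. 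So this step, as you have sketched it, would fail.

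The paper's actual resolution is a \emph{Turing} reduction via enumeration. The key structural observation (Lemma 2.3 and the subsequent Claim) is that for every $\pi \neq \id$ there exist $\gamma_1,\gamma_2$ that are double-transpositions or $3$-cycles and $\gamma_3$ supported on at most $8$ points such that $\gamma_3^{-1}[[\pi,\gamma_1],\gamma_2]\gamma_3 = (1\ 2)(3\ 4)$. Since all these $\gamma_i$ come from sets of size $t^{O(1)}$, one can construct, in log depth, \emph{all} $t^{O(1)}$ candidate vectors $C_\gamma(x)$; if $\pi = \id$ every candidate has product $\id$ (commutation and conjugation fix $\id$), and if $\pi \neq \id$ at least one has product exactly $(1\ 2)(3\ 4)$. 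Feeding each candidate to the assumed $(1\ 2)(3\ 4)$-product decider and taking an OR tree gives a log-depth decider for ``product $= \id$,'' whence NC$^1 = $ L. The separate 1-local map $f_{\alpha\to\beta}$ (built from $O(\log t)$ commutations and conjugations, the content of \S 2) is a many-one reduction, but it is only used to transfer hardness between the fixed target $(1\ 2)(3\ 4)$ and an arbitrary $\alpha$; the unavoidable case analysis on $\pi$'s cycle type is absorbed by the polynomial enumeration, not eliminated.

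Two smaller remarks. First, the bisection step to ``pin down $\pi$ exactly'' is unnecessary: deciding $\pi \stackrel{?}{=} \id$ is already L-complete. Second, your worry about preserving exact uniformity in the oblivious reduction dissolves once you separate the two phases as the paper does: the random self-reduction plus a Chernoff/majority argument converts the $t^{-k}$-distinguisher into a worst-case decider for the $\alpha$-product problem \emph{first}; after that one works entirely with worst-case deciders, and distributional exactness is no longer an issue in the reduction to L.
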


Some support for this conjecture comes from the work of Cook and McKenzie \cite{CookM87} on the relationship between L = log-space and iterated group products. Their results can be used to show that the following problem is L-complete: given $x = (x_1,\ldots,x_t) \in (A_t)^t$, determine if $\prod_i x_i = \id$. (The difference from Definition \ref{hard-def} is that $x$'s product is not guaranteed to be in $\{\alpha,\id\}$.) If one could instead show that the $\alpha$-product problem is L-complete for every $\alpha \in A_t$, the conjecture follows from the random self-reducibility of iterated group products \cite{Kilian88}. However, it was not clear how to show this for even a single $\alpha$.

\subsection{Our results}
We show that indeed the $\alpha$-product problem is L-complete for every $\alpha \in A_t$, and as a result we prove the above conjecture.

\begin{restatable}{theorem}{thmallalphahard} \label{thm:allalphahard}
Assume that there is a circuit family $C$ of depth $O(\log t)$ such that for sufficiently large $t$, there exists $\alpha \in A_t$ such that $C$ decides the $\alpha$-product problem over $(A_t)^t$. Then {\em NC}$^1 =$ \em{L}.
\end{restatable}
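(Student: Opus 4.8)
The plan is to establish L $\subseteq \nc^1$, which together with the standard inclusion $\nc^1\subseteq$ L yields $\nc^1 =$ L. The excerpt already records (from Cook--McKenzie) that the problem of deciding, given $(x_1,\dots,x_t)\in(A_t)^t$, whether $\prod_i x_i=\id$ is L-complete, and inspection shows it arises by a very weak ($\acz$) reduction from an arbitrary L-language; so it suffices to place this ``identity-product'' problem in $\nc^1$. Given the hypothesized family $C$, the goal is an $\nc^1$ reduction of the identity-product problem to $C$: from $x=(x_1,\dots,x_t)$ with $g:=\prod_i x_i\in A_t$, build polynomially many ``controlled'' vectors, one for each point $a\in[t]$, where the $a$-th vector lies in $(A_m)^m$ for a suitable $m=\poly(t)$, has product in $\{\alpha_m,\id\}$ (so it is a legal instance of the $\alpha$-product problem $C$ solves at length $m$), and has product $\id$ exactly when $g(a)=a$; then feed all of them to separate copies of $C$ in parallel and take the AND of ``$C$ reports identity''. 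Building each vector is a projection-type operation (reordering the $x_i$, inserting fixed group elements, conjugating entries by fixed permutations), $C$ has depth $O(\log m)=O(\log t)$, and the AND has depth $O(\log t)$, so the whole thing is an $\nc^1$ circuit of polynomial size; non-uniformity lets us hardwire the $\alpha_m$'s.

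The engine behind the controlled vectors is a commutator that reads off the action of $g$ on one point. Working inside $A_{t+O(1)}$ with a handful of fresh points $p,q,p',p''$ outside $[t]$ (which $g$ necessarily fixes), for $a\in[t]$ set $h_a := g\,(a\,p\,q)\,g^{-1}\,(a\,q\,p)$, the commutator $[g,(a\,p\,q)]$; expanding $g$ as $x_1\cdots x_t$ this is a word of length $2t+2$. A short computation shows $h_a=\id$ if $g(a)=a$ and $h_a=(a\ g(a)\ p)$ otherwise --- always $\id$ or a $3$-cycle, but with the middle entry $g(a)$ unknown. A second commutator layer erases that dependence: conjugating $h_a$ by $\rho_a := h_a\,(a\,p'\,p'')\,h_a^{-1}$ sends the variable $3$-cycle to the \emph{fixed} $3$-cycle $(a\,p'\,p)$ while still fixing $\id$. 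The resulting word $k_a$ has length $O(t)$, equals $\id$ iff $g(a)=a$, and equals a fixed $3$-cycle (independent of $g$) otherwise. Since $g=\id$ iff $g$ fixes every point, the AND over $a$ of ``$k_a=\id$'' decides the identity-product problem, and each such test is now a genuine two-valued product problem.

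The real work is that $C$ is guaranteed to decide the $\alpha$-product problem only for the specific element $\alpha=\alpha_m$ attached to length $m$, which can be an arbitrary nonidentity permutation --- not a $3$-cycle, and not necessarily of small support. To cope, one hardwires $\alpha_m$ and upgrades $k_a$ to a word $K_a$ whose product is \emph{exactly} $\alpha_m$ when $g(a)\neq a$ and $\id$ when $g(a)=a$: roughly, write $\alpha_m$ as a short product of $3$-cycles (or in commutator form, using that $A_m$ is simple), and from the controlled $3$-cycle $k_a$ assemble a controlled copy of $\alpha_m$, then conjugate by a fixed permutation so the two alternatives land precisely on $\{\alpha_m,\id\}$. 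This surgery on cycle structure by commutators is the novel component.

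The step I expect to be the main obstacle is carrying out that surgery \emph{within budget}: the instance length equals the number of points, so $K_a$ must be a word of length at most $m$ and still computable from $x$ in depth $O(1)$, even when $\alpha_m$ has support $\Theta(m)$. A naive broadcast of the one-bit signal ``$g(a)\neq a$'' across $\Theta(m)$ coordinates inflates the word length by a multiplicative factor of $t$, which would break the reduction; getting a length-efficient controlled copy of an arbitrary target out of a controlled $3$-cycle --- or, alternatively, a cheap reduction of the $\alpha_m$-product problem to one for a target of bounded support --- is where the new commutator manipulations must be doing the work. Everything else --- the Cook--McKenzie reduction to the identity-product problem, composition and parallel use of $\nc^1$ circuits, the parity checks that keep all permutations even, and choosing the lengths $m=\poly(t)$ at which to invoke $C$ --- should be routine.
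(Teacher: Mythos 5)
Your approach is genuinely different from the paper's in its ``outer'' reduction, and you have correctly located the technical crux, but you leave that crux unresolved and the structure you guess for its resolution does not match what the paper actually does.

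\paragraph{Where you diverge.} The paper decouples the argument into two independent theorems. First, it shows an L-hardness result for one \emph{fixed small-support} target: the $(1\,2)(3\,4)$-product problem over $(A_t)^t$ is L-hard (its Theorem~\ref{thm:some-hard-alpha}). Second, and separately, it constructs a 1-local map $f_{\alpha\to\beta}:(A_t)^m\to(A_t)^{O(tm)}$ sending $\alpha$-products to $\beta$-products while preserving identity products (Theorem~\ref{thm:localmap}); this is what lets one go from a circuit for a single $\alpha$-product problem to a circuit for the $(1\,2)(3\,4)$-product problem. Your proposal merges these: you build, for each point $a\in[t]$, the commutator $h_a=[g,(a\,p\,q)]$ and the second-layer conjugation $\rho_a h_a\rho_a^{-1}$ to get a controlled \emph{fixed} $3$-cycle $k_a$ whose product is $\id$ iff $g(a)=a$, and then want to upgrade each $k_a$ in place to a controlled copy of $\alpha_m$ and AND the answers. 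Your per-point commutator trick is a clever alternative to the paper's inner reduction, which instead brute-forces over all $t^{O(1)}$ choices of small-support $\gamma=(\gamma_1,\gamma_2,\gamma_3)$ — using Lemma~\ref{lem:to-doub-transp} to guarantee that for the ``right'' $\gamma$ the word $\gamma_3^{-1}[[\prod x_i,\gamma_1],\gamma_2]\gamma_3$ equals $(1\,2)(3\,4)$ — and then ORs. Note a structural consequence: your AND requires every $K_a$ to be a legal promise instance for every input (which your construction does ensure), whereas the paper's OR is forgiving of instances that fall outside the promise for the ``wrong'' $\gamma$.

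\paragraph{The unresolved step, and why the paper's decomposition matters.} You are right that the budget question — keeping each $K_a$ inside the length of the vectors $C$ accepts, while hitting a target $\alpha_m$ of potentially large support — is where the content lies. But you frame it as ``the new commutator manipulations must be doing the work'' to make the controlled-$3$-cycle$\to\alpha_m$ surgery cheap; the paper's commutator manipulations do \emph{not} make that surgery cheap. On the contrary, the unnumbered lower bound in \S1.2 shows that $\Omega(\log t)$ commutations (hence $\Omega(t)$ blowup) are unavoidable for some targets. What the paper does instead is a structural move: it performs the expensive L-hardness reduction only against the \emph{fixed} target $(1\,2)(3\,4)$, for which Lemma~\ref{lem:to-doub-transp} needs just two commutations and one conjugation — constant blowup, so the reduction stays at length $t$ — and then invokes Theorem~\ref{thm:localmap}, with its $O(t)$ blowup, once, as a reduction between the two product problems rather than inside the hardness reduction. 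In short, the paper first proves hardness for a single element using only $O(1)$ commutations (your step is analogous but uses the per-point trick), and the $\log t$-commutation machinery lives in a separate theorem. Your alternative suggestion of ``a cheap reduction of the $\alpha_m$-product problem to one for a target of bounded support'' is not something the paper establishes, and is not obviously available.

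\paragraph{Minor points.} The reduction from an arbitrary L-language to the identity-product problem (via Cook--McKenzie, Theorems~\ref{thm:cook-mckenzie}, \ref{thm:allender-arvind}, and \ref{thm:decide-id-hard}) is NC$^1$, not AC$^0$: it involves computing the permutation $\sigma$, inverting and multiplying permutations, and the parity test for the embedding $S_t\hookrightarrow A_{t+2}$, all of which are $\Theta(\log t)$-depth operations. Your guessed conversion ``write $\alpha_m$ as a short product of $3$-cycles, using that $A_m$ is simple'' is also not quite the paper's method: the paper's Lemma~\ref{lem:convert} proceeds explicitly through a double-transposition (Lemma~\ref{lem:to-doub-transp}), doubles the number of disjoint transpositions by commutating with a product of new transpositions (Lemma~\ref{lem:growtranspositions}), and then fuses the transpositions into cycles (Lemmas~\ref{lem:oddcycles}, \ref{lem:evencycles}); simplicity of $A_m$ is never invoked.
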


\begin{restatable}{corollary}{thmmain}
\label{thm:nc1}
If {\em NC}$^1 \neq $ {\em L}, then {\em NC}$^1$ is $(t^{-\omega(1)},\alpha)$-fooled by $(A_t)^t$ for infinitely many $t$ and all $\alpha \in A_t$.

More precisely, if {\em NC}$^1 \neq$ {\em L} then for all $k$, infinitely many $t$, and all $\alpha \in A_t$, the class of {\em NC}$^1$ circuits with depth $k \log t$ and output length $k \log t$ is $(t^{-k},\alpha)$-fooled by $(A_t)^t$.
\end{restatable}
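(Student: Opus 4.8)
I would prove the quantitatively precise statement by contraposition against Theorem~\ref{thm:allalphahard}; the first displayed conclusion is then just its informal reading. So suppose the conclusion fails for some constant $k$: for all sufficiently large $t$ there exist $\alpha = \alpha_t \in A_t$ and an NC$^1$ circuit $\ell = \ell_t$ of depth and output length $k\log t$ with $\Delta(\ell(D_{\alpha_t}),\ell(D_{\id})) > t^{-k}$. From this I will assemble a (non-uniform) deterministic circuit family $C = (C_t)$ of depth $O(\log t)$ that, for all large $t$, decides the $\alpha_t$-product problem over $(A_t)^t$. Theorem~\ref{thm:allalphahard} then forces NC$^1 = $ L, a contradiction, which proves the corollary.

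\textbf{The construction.} It has three ingredients. \emph{(1) From a statistical test to a Boolean distinguisher.} Since $\Delta(\ell(D_\alpha),\ell(D_\id)) > t^{-k}$ there is a set $S \subseteq \bits^{k\log t}$ with $\Pr[\ell(D_\alpha)\in S] - \Pr[\ell(D_\id)\in S] > t^{-k}$. Hard-wiring $S$, membership in $S$ is a lookup into a table with at most $t^{k}$ entries of length $k\log t$, hence computable by a size-$\poly(t)$, depth-$O(\log t)$ circuit; composing it with $\ell$ gives $g \in$ NC$^1$ of depth $O(\log t)$ with $p_\alpha - p_\id > t^{-k}$, where $p_\beta := \Pr[g(D_\beta)=1]$. \emph{(2) Random self-reduction.} Iterated products over a group are randomly self-reducible \cite{Kilian88}: given $x=(x_1,\dots,x_t)$ with product $p$, replacing $x_i$ by $r_{i-1}^{-1}x_i r_i$ for independent uniform $r_1,\dots,r_{t-1}\in A_t$ (with $r_0 = r_t = \id$) yields a uniformly random vector conditioned on product $= p$, and this map is computable in depth $O(\log t)$ (each coordinate is $O(1)$ products in $A_t$, and composing/inverting permutations on $t$ points is a depth-$O(\log t)$ selection). \emph{(3) Amplification and derandomization.} On an input $x$ with $\prod_i x_i \in \{\alpha,\id\}$, produce $m = \poly(t)$ independent re-randomizations, apply $g$ to each in parallel, sum the outputs, and compare to the hard-wired threshold $\approx m(p_\alpha+p_\id)/2$. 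A Chernoff bound shows that, choosing $m = \Theta(t^{2k}\cdot t^2\log t)$, this errs with probability at most $2^{-\Omega(t^2\log t)}$ on each fixed promise input; since $(A_t)^t$ has at most $2^{O(t^2\log t)}$ promise inputs, a union bound fixes a single setting of all $\poly(t)$ random bits that is correct on every promise input. Hard-wiring that string gives the deterministic $C_t$ deciding the $\alpha_t$-product problem.

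\textbf{Depth bookkeeping and the main obstacle.} The point to watch is that the $m$ re-randomizations and the $m$ copies of $g$ all run in parallel at depth $O(\log t)$, while summing $m = \poly(t)$ bits and thresholding adds only $O(\log m) = O(\log t)$, so $C_t$ has depth $O(\log t)$ as Theorem~\ref{thm:allalphahard} requires. I expect the only real care needed is exactly this — confirming that every gadget (permutation composition/inversion, the set-membership test, the parallel majority) is genuinely $O(\log t)$-depth — together with noting that the non-uniform advice ($\alpha_t$, $\ell_t$, $S$, the good random string) is harmless, since Theorem~\ref{thm:allalphahard} already concerns non-uniform circuit families. Finally, the first displayed statement of the corollary is the qualitative summary of what we have just shown for every $k$: the depth-/output-$k\log t$ subclass of NC$^1$ is $(t^{-k},\alpha)$-fooled by $(A_t)^t$ for infinitely many $t$ and all $\alpha \in A_t$.
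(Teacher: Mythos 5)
Your proposal is correct and matches the paper's proof essentially step for step: contraposition against Theorem~\ref{thm:allalphahard}, converting the statistical test to a Boolean distinguisher via a maximizing set $S$, random self-reduction by interleaved conjugation with uniform $r_i \in A_t$, Chernoff amplification over $\poly(t)$ parallel trials, and a union bound over the $|A_t|^t$ promise inputs to fix the randomness non-uniformly. The only cosmetic difference is that you use a one-sided threshold near $m(p_\alpha+p_\id)/2$ while the paper tests whether the count lies in a two-sided window around $m\eps_\alpha$; both work equally well.
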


In combination with \cite{MilesV-leak}, Corollary \ref{thm:nc1} yields a compiler that is secure against NC$^1$ leakage. As noted in \cite{MilesV-leak}, even 1 bit of leakage from NC$^1$ breaks nearly all previous constructions \cite{IshaiSW03,FaustRRTV10,JumaV10,DziembowskiF12,GoldwasserR12, Rothblum12} (in fact leakage from TC$^0 \subseteq$ NC$^1$ is already enough to break these). Furthermore, securing circuits against arbitrary polynomial-time leakage is known to be {\em impossible}, due to the impossibility of obfuscation \cite{BarakGIRSVY01}.
We note that, like some other compilers \cite{FaustRRTV10,GoldwasserR10,JumaV10,DziembowskiF12}, in the multi-query setting the \cite{MilesV-leak} construction uses a so-called ``secure hardware component''. 
Specifically, the compiled circuits have a gate that produces a uniform sample from $(A_t)^t$ with product $= \id$, and whose internal computation is not visible to the leakage function. 

\paragraph{On the amount of leakage.}
The amount of NC$^1$ leakage tolerated using Corollary \ref{thm:nc1} is logarithmic in the security parameter, which is smaller than one might wish (and smaller than can be achieved against other leakage classes). This limitation seems to be inherent when starting from an assumption such as NC$^1$ $\neq$ L. One approach to circumventing this is by instead starting from a {\em compression bound}, i.e.\ a bound on the correlation between NC$^1$ circuits with $> 1$ output bit and functions in L. Indeed, the compression bound due to Dubrov and Ishai \cite{DubrovI06} against AC$^0$ was used by \cite{FaustRRTV10,MilesV-leak} to achieve security against a linear amount of AC$^0$ leakage. 

In this setting however, there is an issue which is that the proof of Theorem \ref{thm:allalphahard} gives a Turing reduction to the $\alpha$-product problem, as opposed to a many-one reduction. With such a reduction it is not clear how to translate a (hypothetical) NC$^1$ circuit compressing the $\alpha$-product problem into a circuit compressing an arbitrary language in L, and therefore a generic assumption on the inability of NC$^1$ to compress L does not immediately yield improved leakage bounds. In light of this, it would be interesting to show that the $\alpha$-product problem is L-complete under many-one reductions. Jumping ahead, showing this for any fixed $\alpha \in A_t$ would suffice to show it for all $\alpha$ via Theorem \ref{thm:localmap}.

\medskip

We refer the reader to \cite{MilesV-leak} for further details on leakage-resilience, and now give an overview of the proof of Theorem \ref{thm:allalphahard}. The proof of Corollary \ref{thm:nc1} given Theorem \ref{thm:allalphahard} uses straightforward techniques and appears in \S \ref{sec:puttingtogether}.

\subsection{Techniques}\label{sec:techniques}

\paragraph{The Cook-McKenzie construction.}
Our starting point is the work by Cook and McKenzie \cite{CookM87} who show that a number of problems related to $S_t$, the group of all permutations on $t$ points, are complete for L. A key tool in their work is the following construction of a permutation that encodes acceptance of a given branching program on a given input. (We equate L with the set of polynomial-size branching programs.)

\begin{restatable}[\cite{CookM87}]{theorem}{cookmckenzie} \label{thm:cook-mckenzie}
There exists $t = O(s)$ and a circuit $C$ of depth $O(\log s)$ such that, on input $(B,x)$ where $B$ is a branching program of size $s \geq |x|$, $C$ outputs a permutation $\sigma \in S_{t}$ such that $B$ accepts $x$ iff $1$ and $t$ are in the same cycle in $\sigma$'s disjoint cycle representation.
\end{restatable}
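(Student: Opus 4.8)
The plan is to reduce, in depth $O(\log s)$, acceptance of the branching program to a reachability question in a functional graph, and then to encode that reachability condition --- via the Euler tour technique --- as the statement that two designated points lie in a common cycle of a permutation.

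First I would put $B$ in a convenient normal form: a single start node $s$, a single accepting sink $a$, a single rejecting sink $r$, every non-sink node $v$ carrying a variable index $\iota(v)$ and two out-edges, and (as is standard for branching programs) the underlying graph a DAG, so that on input $x$ the path from $s$ that at each node $v$ follows the edge labelled $x_{\iota(v)}$ terminates at $a$ or $r$. This normalization costs only a polynomial blow-up in $s$ and depth $O(\log s)$, and the degenerate cases $s \in \{a,r\}$ are handled separately. Write $n_x(v)$ for the endpoint of the active edge at $v$ (a simple selection, from the description of $B$, controlled by a single bit of $x$), and set $n_x(a) = a$, $n_x(r) = r$. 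Then $v \mapsto n_x(v)$ defines a functional graph $G_x$ on the node set $V$ of $B$ in which every node has out-degree $1$. Since $B$ is a DAG the only cycles of $G_x$ are the self-loops at $a$ and $r$; deleting these gives an in-forest $F$ on $V$, and $B$ accepts $x$ if and only if the $n_x$-path from $s$ ends at $a$, i.e.\ if and only if $s$ lies in the tree of $F$ rooted at $a$ --- equivalently, $s$ and $a$ lie in the same tree of $F$.

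It then remains to convert ``same tree of $F$'' into ``same cycle of a permutation,'' computably in depth $O(\log s)$, which I would do by the Euler-tour encoding. For each node $v \in V$ introduce two points $d_v$ (``enter $v$'') and $u_v$ (``leave $v$''), so $t := 2|V| = O(s)$, and let $\sigma \in S_t$ trace the depth-first Euler tour of $F$, with the children of each node ordered by index and the tour around each tree closed into a single cycle. Explicitly: $\sigma(d_v) = d_c$ for $c$ the first child of $v$ when $v$ has a child, and $\sigma(d_v) = u_v$ otherwise; $\sigma(u_v) = d_{c'}$ for $c'$ the next sibling of $v$ when one exists, $\sigma(u_v) = u_{\mathrm{parent}(v)}$ when $v$ is a non-root with no next sibling, and $\sigma(u_v) = d_v$ when $v$ is a root. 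Each primitive appearing here --- parent of $v$ (which is just $n_x(v)$), whether $v$ is a leaf or a root, first child $\min\{w : n_x(w) = v\}$, next sibling $\min\{w > v : n_x(w) = n_x(v)\}$ --- is a minimum or an OR over a predicate of the form ``$n_x(\cdot)$ equals a fixed value,'' hence computable in depth $O(\log s)$; so $\sigma$ is computable in depth $O(\log s)$ as well. One then verifies that $\sigma$ is a bijection and that the points attached to each tree of $F$ constitute exactly one cycle of $\sigma$, so that $d_v$ and $d_w$ share a cycle of $\sigma$ iff $v$ and $w$ share a tree of $F$. Relabelling $d_s$ as $1$ and $d_a$ as $t$ finishes the construction.

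I expect the main obstacle to be this last step: $\sigma$ must be at once (i) a bijection whose cycles are exactly the trees of $F$ and (ii) locally computable in log-depth. The tempting one-point-per-node encoding fails (ii), since ``upon finishing a leaf, pop up to the nearest ancestor with an unvisited sibling'' is a potentially long pointer chase; using two points per node --- equivalently, working with the directed edges of $F$ --- is exactly what localizes this step, at the cost of only a factor of two in $t$. Checking bijectivity and the one-cycle-per-tree property, and handling the boundary cases (roots, leaves, last children) in the definition of $\sigma$ correctly, is then routine but should be carried out with care.
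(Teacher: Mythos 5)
Your proof is correct and matches the construction the paper sketches and attributes to \cite{CookM87}; the paper itself only gives a one-line description and defers the details. The two-darts-per-node Euler-tour permutation you build is precisely what the paper means by ``a permutation $\sigma$ that performs one step of a depth-first search in $B$ when projecting to the edges consistent with the input $x$,'' and your observation that a single point per node fails NC$^1$-computability (because of the pointer chase back to the nearest unfinished ancestor) is exactly the reason the dart encoding is used.
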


Recall that any permutation can be uniquely written as a product of disjoint cycles. Throughout, when a permutation $\sigma \in S_t$ is the input or output of an algorithm, it is given in $(t \log t)$-bit pointwise representation as $(\sigma(1),\ldots,\sigma(t))$. Inversion and pairwise multiplication in $S_t$ can then be implemented in NC$^1$, because each essentially amounts to indexing an element from an array of length $t$.

Theorem \ref{thm:cook-mckenzie} is proved by constructing a permutation $\sigma$ that performs one step of a depth-first search in $B$ when projecting to the edges consistent with the input $x$. The nodes are labeled by $[t]$, and the labels of the start and accept nodes ($1$ and $t$ wlog) are in the same cycle of $\sigma$ iff the accept node is reachable from the start node. 

As observed to us by Eric Allender and V. Arvind (personal communication), Theorem \ref{thm:cook-mckenzie} can be used to show that the following very natural problem is L-complete: given $(x_1,\ldots,x_t) \in (S_t)^t$, decide if $\prod_i x_i = \id$. A simple embedding trick shows that this problem remains L-complete even over the group $A_t$ (see \S\ref{sec:decide-id-hard}). 

However it was not clear whether Theorem \ref{thm:cook-mckenzie} could be used to show that the $\alpha$-product problem is L-complete, for every or even for a single $\alpha \in A_t$. The issue is that the permutation $\sigma$ constructed in Theorem \ref{thm:cook-mckenzie} depends on the structure of the branching program $B$ when projecting to the edges whose labels match the input $x$. So while the reduction to the ``product $\stackrel{?}{=} \id$'' problem always produces a vector with product $= \id$ when $B(x) = 0$, when $B(x) = 1$ the product can change depending on $x$, rather than being fixed to some $\alpha$.

Before explaining how we overcome this, we briefly review some facts and terminology about permutations that will be used; a more detailed discussion appears in e.g.\ \cite[\S 2]{Wilson09}.

\paragraph{Permutation groups.}
$S_t$ is the group of all permutations on $[t] = \{1,\ldots,t\}$. Any permutation can be written as a product of transpositions, and the permutation is called {\em even} or {\em odd} depending on whether the number of transpositions in this product is even or odd. Thus the product of two permutations is even iff both are even or both are odd. $A_t \subset S_t$ is the group of all even permutations on $[t]$.

A {\em $k$-cycle} $(a_1\ \cdots\ a_k)$ is a permutation that maps $a_i \mapsto a_{i+1}$ for $i < k$ and $a_k \mapsto a_1$. A $k$-cycle is an even permutation iff $k$ is odd, because any $k$-cycle can be written as a product of $k-1$ transpositions $(a_1\ a_2)(a_1\ a_3)\cdots(a_1\ a_k)$. Every permutation can be uniquely written as a product of disjoint cycles, and the list of these cycles' lengths is the {\em cycle type} of the permutation.

$\alpha,\beta \in G$ are {\em conjugate} if there exists $\gamma \in G$ such that $\gamma^{-1} \alpha \gamma = \beta$. 
Conjugacy is an equivalence relation and partitions $G$ into its conjugacy classes. In $S_t$, two permutations are conjugate iff they have the same cycle type. In $A_t$ this continues to hold if that cycle type contains either an even-length cycle or two cycles of the same length, but for cycle types consisting of distinct odd lengths (including length $1$) there are two distinct conjugacy classes. For example, the two 5-cycles $(1\ 2\ 3\ 4\ 5)$ and $(1\ 2\ 3\ 5\ 4)$ are conjugate in $S_5$ and $S_6$ but not in $A_5$ nor $A_6$.

The {\em commutator} of two elements $\alpha,\beta \in G$ is denoted $[\alpha,\beta]$ and defined by $[\alpha,\beta] := \alpha \beta \alpha^{-1} \beta^{-1}$. We will use the fact that for every $\alpha$, $[\alpha,\id] = [\id,\alpha] = \id$. 

\medskip

We now explain the proof of Theorem \ref{thm:allalphahard}, which has two steps. 
We first show that if NC$^1$ can decide the $\alpha$-product problem for any fixed $\alpha \in A_t$, then it can do so for every $\alpha \in A_t$. Next we show that the $\alpha$-product problem is L-complete for the specific choice of $\alpha = (1\ 2)(3\ 4)$. The combination of these implies that if NC$^1$ can decide the $\alpha$-product problem for any $\alpha$, then NC$^1 =$ L. In light of the above discussion on the amount of leakage, we remark that only the latter step uses a Turing reduction.

\paragraph{Mapping group products.}
We reduce the $\alpha$-product problem to the $\beta$-product problem, for any $\id \neq \alpha,\beta \in A_t$, by constructing an NC$^1$ map $$f_{\alpha \to \beta} : (A_t)^m \to (A_t)^{O(tm)}$$ that maps $\alpha$-products to $\beta$-products while preserving $\id$-products. Namely, this map satisfies 
\begin{equation} \label{eq:a2bmap}
\prod_i x_i = \alpha\ \Rightarrow\ \prod_i f_{\alpha \to \beta}(x)_i = \beta \qquad\mbox{and}\qquad \prod_i x_i = \id\ \Rightarrow\ \prod_i f_{\alpha \to \beta}(x)_i = \id
\end{equation}
for every $x = (x_1,\ldots,x_m) \in (A_t)^m$. (To prove Theorem \ref{thm:nc1} we choose $m = t$, but here we keep them separate to make the blowup in output length clear.) Moreover $f_{\alpha \to \beta}$ has the nice property that each output element depends on only 1 input element, i.e.\ it is {\em 1-local}.

If $\alpha$ and $\beta$ are in the same conjugacy class, then there is a simple 1-local NC$^1$ map with no blowup in length that satisfies (\ref{eq:a2bmap}). This map is $$(x_1,\ \ldots,\ x_m)\ \mapsto\ (\gamma^{-1} x_1,\ \ldots,\ x_m \gamma)$$ where $\gamma \in A_t$ satisfies $\gamma^{-1} \alpha \gamma = \beta$. But because $\alpha$ and $\beta$ are in the same conjugacy class only if they have the same cycle type, to map between permutations of different cycle types a different technique is needed.

Our idea is to use {\em commutation} to map between permutations of different cycle types, and combined with conjugation as above this allows us to construct a map from any $\alpha$ to any $\beta$. As noted previously commutation is identity-preserving, i.e.\ $[\gamma,\id] = [\id,\gamma] = \id$ for every $\gamma$, which makes it a good candidate for maps satisfying (\ref{eq:a2bmap}). 
One limitation is that commutation doubles the length of the vector, i.e.\ we compute $$(x_1,\ \ldots,\ x_m)\ \mapsto\ (x_1,\ \ldots,\ x_m\gamma,\ x_m^{-1},\ \ldots,\ x_1^{-1}\gamma^{-1})$$ to map $\alpha$-products to $[\alpha,\gamma]$-products while preserving $\id$-products. But by using $\leq \log t + O(1)$ commutations, here the total length increase is limited to a factor of $O(t)$.

We note that our use of commutation is different than in Barrington's construction \cite{Barrington89}. There it is used to simulate AND, and each commutator is formed from two vectors whose products are both unknown. Here we use it to manipulate cycle structure, and each commutator is formed from a vector whose product is unknown and a fixed group element.

Our framework for using commutation to map between different cycle types has three steps.
\begin{enumerate}
\item {\em Reduce to a product of two disjoint transpositions.} (Lemma \ref{lem:to-doub-transp})

We first convert any $\alpha \neq \id$ into a product of two disjoint transpositions. For example if $\alpha$'s disjoint cycle representation contains a 4-cycle $(a\ b\ c\ d)$, then commutating with $\gamma := (a\ b)(c\ d)$ yields $[\alpha,\gamma] = (a\ c)(b\ d)$.

\item {\em Grow the number of transpositions.} (Lemma \ref{lem:growtranspositions})

From a product of disjoint transpositions, we use one commutation to double the number of transpositions, and $\log k$ commutations to convert from $2$ to $2k$ disjoint transpositions. For example if $\alpha = (a\ b)(c\ d)$, then commutating with $\gamma := (a\ e)(b\ f)(c\ g)(d\ h)$ yields $[\alpha,\gamma] = (a\ b)(c\ d)(e\ f)(g\ h)$.

\item {\em Combine transpositions into cycles.} (Lemmas \ref{lem:oddcycles}-\ref{lem:evencycles})

We finally convert products of disjoint transpositions into longer cycles. For example if $\alpha = (a_1\ b_1)\cdots(a_k\ b_k)$, then commutating with $\gamma := (a_1\ b_1\ a_2\ b_2\ \cdots\ a_k\ b_k\ c)$ yields the $(2k+1)$-cycle $[\alpha,\gamma] = (a_1\ \cdots\ a_k\ b_k\ \cdots\ b_1\ c)$.
\end{enumerate}

In total we use $\leq \log t + O(1)$ commutations. This turns out to be tight for certain starting and target permutations as shown in the following theorem, and thus for these permutations a map satisfying (\ref{eq:a2bmap}) with smaller output length requires different techniques. (The theorem also holds if both commutations and conjugations are allowed.)

\begin{theorem}
There exist $\alpha,\beta \in A_t$ such that $\alpha$ cannot be converted to $\beta$ with fewer than $\log(t) - 1$ commutations. That is, for every $\ell < \log(t) - 1$ and every sequence $\gamma_1,\ldots,\gamma_\ell \in A_t$, $\beta \neq [[\cdots[ [\alpha,\gamma_1], \gamma_2], \cdots], \gamma_\ell]$.
\end{theorem}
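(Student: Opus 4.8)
The plan is to find a single integer-valued ``complexity measure'' $\phi$ on permutations that (a) is small on a suitable $\alpha$, (b) is large on a suitable $\beta$, and (c) can at most double under one commutation. Given such a $\phi$, an $\ell$-fold iterated commutator $\pi_\ell := [[\cdots[\alpha,\gamma_1],\cdots],\gamma_\ell]$ satisfies $\phi(\pi_\ell) \le 2^\ell \phi(\alpha)$, so $\pi_\ell = \beta$ forces $2^\ell \phi(\alpha) \ge \phi(\beta)$, i.e.\ $\ell \ge \log(\phi(\beta)/\phi(\alpha))$; choosing $\alpha,\beta$ to make this ratio $\approx t/2$ then yields the theorem.

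The measure I would use is the \emph{reflection length} $\phi(\pi)$, defined as the minimum number of transpositions whose product is $\pi$; equivalently $\phi(\pi) = t - (\text{number of cycles of }\pi\text{, counting fixed points as }1\text{-cycles})$. The facts I need are that $\phi$ is subadditive, $\phi(\pi\rho)\le\phi(\pi)+\phi(\rho)$ (concatenate transposition factorisations), and invariant under inversion and conjugation, $\phi(\pi^{-1})=\phi(g\pi g^{-1})=\phi(\pi)$ (reversing a factorisation, or conjugating each factor, gives another factorisation of the same length). Writing $[\alpha,\gamma]=\alpha\cdot(\gamma\alpha^{-1}\gamma^{-1})$ and combining these gives
$$\phi([\alpha,\gamma]) \le \phi(\alpha) + \phi(\gamma\alpha^{-1}\gamma^{-1}) = \phi(\alpha) + \phi(\alpha^{-1}) = 2\phi(\alpha),$$
which is property (c). Since conjugation leaves $\phi$ exactly unchanged, the same computation shows the bound is unaffected when conjugations are allowed alongside commutations, which gives the parenthetical remark.

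For the instantiation I would take $\alpha$ to be a single $3$-cycle, the even permutation of least reflection length: $\phi(\alpha)=2$ (one cannot do better, since $\phi=1$ forces a lone transposition, which is odd). For $\beta$ I would take a permutation of greatest reflection length inside $A_t$: a $t$-cycle when $t$ is odd, with $\phi(\beta)=t-1$, and a $(t-1)$-cycle when $t$ is even, with $\phi(\beta)=t-2$ (the full $t$-cycle being odd in that case). Plugging in, reaching $\beta$ in $\ell$ commutations forces $2\cdot 2^\ell \ge \phi(\beta) \ge t-2$, hence $\ell \ge \log(t-2) - 1$, and with a careful treatment of the even/odd split and the integrality of $\ell$ one gets $\ell \ge \log(t) - O(1)$ with the stated additive constant; in any case this matches the $\log t + O(1)$ upper bound of the construction up to the additive term.

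I do not expect the core argument to be hard: subadditivity and conjugation-invariance of reflection length are standard, and everything follows mechanically from property (c). The only fussy points are bookkeeping: checking that the chosen $\beta$ actually lies in $A_t$ (a parity check on its cycle type, which is why $t$ even needs the $(t-1)$-cycle), and pinning down the exact additive constant in the final bound. An alternative is to use the support $|\{i:\pi(i)\neq i\}|$, which also at most doubles under commutation; it is slightly lossier (a $3$-cycle has support $3$ rather than reflection length $2$), so I would prefer reflection length to keep the constant clean.
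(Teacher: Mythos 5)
Your proof is correct and follows the same overall schema as the paper's: pick an integer-valued measure on permutations that can at most double under a single commutation, choose $\alpha$ with small measure and $\beta$ with large measure, and conclude $\ell \geq \log(\phi(\beta)/\phi(\alpha))$. The difference is the measure. The paper uses the support size $M(\pi) := |\{i : \pi(i) \neq i\}|$, observing that $[\alpha,\gamma] = \alpha \cdot (\gamma\alpha^{-1}\gamma^{-1})$ and a point moved by the product must be moved by one of the factors, while $M(\gamma\alpha^{-1}\gamma^{-1}) = M(\alpha)$; with $\alpha$ a $3$-cycle ($M=3$) and $\beta$ a $t$-cycle ($M=t$) this gives $\ell \geq \log(t/3) = \log t - \log 3$. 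You use reflection length $\phi(\pi) = t - c(\pi)$, whose subadditivity and conjugation/inversion invariance yield the same doubling bound; with $\alpha$ a $3$-cycle ($\phi=2$) you get $\ell \geq \log(\phi(\beta)) - 1$, which is slightly sharper since $\phi(\beta) \geq t-2$ while $M(\alpha)=3 > \phi(\alpha)=2$. In fact your constant matches the ``$\log(t)-1$'' in the theorem statement more tightly than the paper's own $\log(t) - \log 3$. You also notice and correctly handle the parity issue the paper glosses over: for even $t$ the $t$-cycle $(1\ 2\ \cdots\ t)$ is odd and hence not in $A_t$, so one must fall back to a $(t-1)$-cycle. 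Finally, you explicitly identify support size as an alternative measure and explain why you prefer reflection length, so you have effectively reconstructed both arguments. No gaps.
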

\begin{proof}
For $\alpha \in A_t$, let $M(\alpha) := |\{i \in [t]\ |\ \alpha(i) \neq i\}|$ denote the number of points moved (i.e.\ not fixed) by $\alpha$. We show that $M([\alpha,\gamma]) \leq 2 \cdot M(\alpha)$ for every $\alpha$ and $\gamma$, i.e.\ the number of points moved by $[\alpha,\gamma]$ is at most twice the number of points moved by $\alpha$. This implies the theorem by choosing $\alpha = (1\ 2\ 3)$ and $\beta = (1\ 2\ \cdots\ t)$, because $M(\alpha) = 3$ and $M(\beta) = t$.

Pick $\gamma \in A_t$. Observe that $\gamma\alpha^{-1}\gamma^{-1}$ has the same cycle type as $\alpha$ because it is conjugate to $\alpha$ in $S_t$. This implies $M(\gamma \alpha^{-1} \gamma^{-1}) = M(\alpha)$, and therefore $[\alpha,\gamma]=\alpha\gamma\alpha^{-1}\gamma^{-1}$ moves at most $2 \cdot M(\alpha)$ points because any such point must be moved by either $\alpha$ or $\gamma\alpha^{-1}\gamma^{-1}$.
\end{proof}

\paragraph{Hardness for a single element.}
To prove that the $(1\ 2)(3\ 4)$-product problem is L-complete, we reduce from the problem of deciding if $x \in (A_t)^t$ has product $= \id$. That is, from any $x \in (A_t)^t$ we compute a vector $y \in (A_t)^t$ such that $y$ has product $= \id$ if $x$ does, and otherwise $y$ has product $= (1\ 2)(3\ 4)$. 

As in step 1 above, we first show that if $\prod_i x_i = \alpha \neq \id$ then there is some $\gamma_1 \in A_t$ (which depends on $\alpha$) such that $[\alpha,\gamma_1]$ is a double-transposition, i.e.\ a product of two disjoint transpositions. Then because all double-transpositions are conjugate in $A_t$ there is some $\gamma_2 \in A_t$ such that $\gamma_2^{-1} \cdot [\alpha,\gamma_1] \cdot \gamma_2 = (1\ 2)(3\ 4)$, and a vector with this product can be computed from $x$ in NC$^1$ if we know $\gamma_1,\gamma_2$.

The problem of course is that we do not know $\gamma_1,\gamma_2$ without knowing $\alpha = \prod_i x_i$ which we cannot compute. We resolve this by showing that for any $\alpha \neq \id$, $\gamma_1$ and $\gamma_2$ as above can be taken from the set of permutations that are fixed on all but $\leq 8$ points in $[t]$. Since there are $< {t \choose 8} \cdot |A_8| = t^{O(1)}$ such permutations, we can thus construct a set of $t^{O(1)}$ vectors such that if $x$ has product $= \id$ then they all do, and otherwise some vector has product $= (1\ 2)(3\ 4)$. Then the proof is completed by applying an NC$^1$ circuit deciding the $(1\ 2)(3\ 4)$-product problem to each vector and a depth-$O(\log t)$ OR tree to the results.

\paragraph{Organization.}
The rest of the paper is organized as follows. In \S \ref{sec:mapping} we describe our mapping between group products and construct the function $f_{\alpha \to \beta}$ defined above for any $\alpha$ and $\beta$. In \S \ref{sec:1234} we show that the $(1\ 2)(3\ 4)$-product problem is L-complete. Together these prove Theorem \ref{thm:allalphahard}, and in \S \ref{sec:puttingtogether} we prove Corollary \ref{thm:nc1}.

\section{Mapping group products} \label{sec:mapping}

In this section we prove the following theorem.

\begin{theorem} \label{thm:localmap}
Let $t \equiv 2\, (\bmod\ 4)$ and let $\id \neq \alpha, \beta \in A_t$. Then for all $m$ there is a 1-local function $f : (A_t)^m \to (A_t)^{O(tm)}$ computable in depth $O(\log t)$ that maps $\alpha$-products to $\beta$-products while preserving the identity, i.e.\ that satisfies $\forall x = (x_1,\ldots,x_m) \in (A_t)^m$:
$$\prod_i x_i = \alpha \Rightarrow \prod_i f(x)_i = \beta \qquad\mbox{and}\qquad \prod_i x_i = \id \Rightarrow \prod_i f(x)_i = \id.$$
\end{theorem}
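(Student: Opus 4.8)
The plan is to realize $f = f_{\alpha\to\beta}$ as a composition of two elementary ``gadget'' maps. For a fixed $\gamma \in A_t$, the \emph{conjugation gadget} is $(x_1,\ldots,x_m) \mapsto (\gamma^{-1}x_1,\,x_2,\ldots,x_{m-1},\,x_m\gamma)$; its product is $\gamma^{-1}(\prod_i x_i)\gamma$ and its length is unchanged. The \emph{commutation gadget} is $(x_1,\ldots,x_m) \mapsto (x_1,\ldots,x_{m-1},\,x_m\gamma,\,x_m^{-1},\ldots,x_1^{-1}\gamma^{-1})$; a short calculation gives product $(\prod_i x_i)\,\gamma\,(\prod_i x_i)^{-1}\gamma^{-1} = [\prod_i x_i,\gamma]$, and it doubles the length. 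Both gadgets send $(A_t)$-vectors to $(A_t)$-vectors precisely because $\gamma \in A_t$, and both preserve $\id$-products since $\gamma^{-1}\id\gamma = \id$ and $[\id,\gamma]=\id$. A composition of such gadgets is $1$-local (a $1$-local map post-composed with a $1$-local map is again $1$-local), and by induction every output coordinate of any such composition has the shape $w\,x_i^{\pm1}\,w'$ for group elements $w,w' \in S_t$ that depend only on the coordinate; hence the composition is computable in depth $O(\log t)$ (fetch $x_i$, invert it once if needed, and left/right multiply by fixed permutations, which is pointwise application of a fixed function plus rewiring) --- notably with no dependence of the depth on the number of gadgets. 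The output length is $m \cdot 2^{c}$ where $c$ is the number of commutation gadgets used. So the theorem reduces to the purely group-theoretic task of transforming $\alpha$ into $\beta$ by a sequence of conjugations and commutations using at most $\log t + O(1)$ commutations.

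For that task I would run the three-stage scheme from \S\ref{sec:techniques}. \emph{Stage 1} (Lemma \ref{lem:to-doub-transp}): convert $\alpha \neq \id$ into a double-transposition $\tau$ using $O(1)$ commutations --- locate a nontrivial cycle of $\alpha$, conjugate if convenient, and commutate with a suitable even permutation supported near that cycle to carve out two disjoint transpositions. \emph{Stage 2} (Lemma \ref{lem:growtranspositions}): one commutation with an appropriate product of transpositions doubles the number of disjoint transpositions, so $\leq \log k + O(1)$ commutations turn $\tau$ into a product of $2k$ disjoint transpositions, with $k$ chosen (with a little slack) large enough to feed all of $\beta$'s cycles. \emph{Stage 3} (Lemmas \ref{lem:oddcycles}--\ref{lem:evencycles}): $O(1)$ further commutations glue prescribed consecutive blocks of transpositions into cycles --- $j$ transpositions become a $(2j{+}1)$-cycle by commutating with the $(2j{+}1)$-cycle running through all $2j$ points and one spare, and even-length cycles are produced in the analogous paired form --- yielding a permutation $\beta'$ of the same cycle type as $\beta$; a final conjugation gadget carries $\beta'$ to $\beta$. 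Since Stages 1 and 3 cost $O(1)$ commutations and Stage 2 costs $\leq \log k + O(1) \leq \log t + O(1)$ (as $2k \le t + O(1)$), the total is $\leq \log t + O(1)$, so the output length is $O(tm)$ as claimed.

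The step I expect to be the main obstacle is the final conjugation combined with the parity restriction: every $\gamma$ used in a gadget must lie in $A_t$, so $\beta'$ can be conjugated to $\beta$ only when the two are conjugate \emph{in $A_t$}, not merely in $S_t$, and for cycle types consisting of distinct odd lengths (counting fixed points) there are two $A_t$-classes. Resolving this requires building $\beta'$ in the correct chirality --- I would give Stage 3 the freedom to produce a cycle or its mirror image (use the gluing cycle or its inverse) --- and reserving a pair of spare points to restore evenness whenever a naturally chosen $\gamma$ comes out odd; this is precisely where the hypothesis $t \equiv 2 \pmod 4$ is spent, to guarantee the needed spare structure and parity. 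The remaining work is routine: verifying the two product identities for the gadgets, checking $1$-locality and the $w\,x_i^{\pm1}\,w'$ normal form under composition, and confirming that each intermediate permutation leaves enough free points for the next commutation (ensured by the slack built into $k$).
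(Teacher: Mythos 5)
Your gadget framework (the conjugation and commutation maps), the $1$-locality argument, and the $O(\log t)$-depth argument all match the paper's, and the three-stage scheme is the right intuition. But your top-level architecture differs from the paper's in a way that leaves a genuine gap. You propose to transform $\alpha$ into all of $\beta$ by a \emph{single} chain of gadgets, landing on a permutation $\beta'$ of $\beta$'s cycle type and then conjugating. The paper instead runs a \emph{separate} short chain (given by Lemma~\ref{lem:convert}) for each cycle $\sigma$ of $\beta$ (or each pair of even-length cycles), producing a vector with product exactly $\sigma$, and then \emph{concatenates} these vectors. Concatenation is what lets Lemma~\ref{lem:convert} stay simple --- it only ever needs to hit a single odd-length cycle or a product of two disjoint even-length cycles, never an arbitrary $\beta$ --- while the products of the concatenated chains multiply out to $\beta$, $\id$-preservation is automatic since each chain preserves $\id$, and the total output length is $\sum_{\sigma} O(|\sigma|) = O(t)$.

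The single-chain route has a concrete obstruction in Stage~3. Commutation is destructive: there is no $\gamma$ with $[\sigma,\gamma] = \sigma$ unless $\sigma = \id$ (expanding gives $\gamma\sigma^{-1}\gamma^{-1} = \id$), and choosing $\gamma$ disjoint from $\sigma$'s support gives $[\sigma,\gamma] = \id$. So once a block of transpositions has been commutated into a finished cycle, any further commutation --- whether it touches that cycle's support or not --- changes or erases it. By Lemma~\ref{lem:oddcycles}, some target cycles need $1$ Stage-3 commutation and others need $2$ (depending on the parity of $(|\sigma|-1)/2$), so in general the blocks do not all ``finish'' on the same commutation, and you do not explain how to synchronize them; it is not clear this is even possible. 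The paper's concatenation sidesteps this entirely by never combining chains. Finally, your chirality remark is pointed in the right direction for the two-$A_t$-classes issue, but note the paper resolves it by \emph{pre}-conjugating the input $\alpha$ rather than post-conjugating the output: when the target is a $(t-1)$-cycle the class of the output cannot always be corrected by a conjugation applied afterward, whereas all products of a fixed number of disjoint transpositions lie in a single $A_t$-class, so the adjustment can always be made at the start.
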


The function $f$ is constructed by concatenating compositions of functions from the following two families, where the compositions are given by Lemma \ref{lem:convert}.
$$\conj = \{\conj_\gamma(\alpha) := \gamma^{-1} \alpha  \gamma\ |\ \gamma \in A_t\} \quad\qquad \comm = \{\comm_\gamma(\alpha) := \alpha\gamma\alpha^{-1}\gamma^{-1}\ |\ \gamma \in A_t\}$$

\begin{lemma} \label{lem:convert}
Let $t \equiv 2\, (\bmod\ 4)$ and let $\alpha, \beta \in A_t$ such that $\alpha \neq \id$ and $\beta$ is either a cycle of odd length $k$ or is the product of two disjoint even-length cycles of total length $k$. Then, there is a sequence $f_1, \ldots, f_\ell \in (\conj\, \cup \comm)$ such that $f(\alpha) = \beta$, where $f := f_\ell \circ \cdots \circ f_1 $ and $\ell = \log k + O(1)$.
\end{lemma}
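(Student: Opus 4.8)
The plan is to build the sequence $f_1,\dots,f_\ell$ as the concatenation of three blocks, mirroring the three-step strategy of \S\ref{sec:techniques}: first drive $\alpha$ to a product of two disjoint transpositions, then grow this to a product of the right number of disjoint transpositions, and finally assemble $\beta$. Every map used is of the form $\conj_\gamma$ or $\comm_\gamma$ for some $\gamma\in A_t$, and I would track the length of each block to get $\ell=\log k+O(1)$ overall. \textbf{Block 1} (constant length): split on the cycle type of $\alpha\neq\id$. If $\alpha$ has a cycle of length $\geq 3$ containing $x\to y\to \alpha(y)$, then commuting with the even permutation $(x\ y)(a\ b)$ (the disjoint fresh transposition $(a\ b)$ only serves to stay in $A_t$ and cancels out, by $[\alpha'\alpha'',\gamma]=[\alpha',\gamma]$ when $\alpha''$ commutes with $\gamma$) has the effect of $\comm_{(x\ y)}$, producing $(y\ \alpha(y))(x\ y)$, a $3$-cycle since $\alpha(y)\notin\{x,y\}$; one more commutator turns a $3$-cycle into a double transposition (e.g.\ $\comm_{(1\ 2\ 4)}((1\ 2\ 3))=(1\ 2)(3\ 4)$). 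If instead $\alpha$ is a product of an even number $\geq 2$ of disjoint transpositions, I commute with a short permutation supported on only two of them; the remaining transpositions are disjoint from $\gamma$ and vanish, leaving a double transposition.

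\textbf{Block 2} (length $\leq\log k+O(1)$): let $m$ be the number of disjoint transpositions needed by Block 3, namely $m=(k-1)/2$ when $\beta$ is an odd $k$-cycle and $m=k/2$ when $\beta$ is a product of two disjoint even cycles of total length $k$. Starting from the double transposition, I would grow to a product of exactly $m$ disjoint transpositions on a prescribed set of $2m\leq k\leq t$ points of $[t]$. The basic move is doubling --- one commutator sends a product of $r$ disjoint transpositions to one of $2r$, e.g.\ $\comm_{(a\ e)(b\ f)(c\ g)(d\ h)}((a\ b)(c\ d))=(a\ b)(c\ d)(e\ f)(g\ h)$ --- together with one extra step that lets the count land on a non-power-of-two value. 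The hypothesis $t\equiv 2\pmod{4}$ is what makes the relevant parities (and hence the tiling of $[t]$ by disjoint transpositions) line up.

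\textbf{Block 3} (constant length): one further commutator converts the $m$ disjoint transpositions into a permutation with exactly $\beta$'s cycle type. In the odd case, commuting with the single $(2m{+}1)$-cycle running through all $2m$ transposition points and one fresh point $c$ gives the $(2m{+}1)=k$-cycle $\comm_{(a_1\ b_1\ \cdots\ a_m\ b_m\ c)}((a_1\ b_1)\cdots(a_m\ b_m))=(a_1\cdots a_m\,b_m\cdots b_1\,c)$; in the even case, commuting with a suitable product of two cycles (an element of $A_t$) gives a product of two disjoint even cycles of total length $k=2m$. A final conjugation $\conj_{\gamma'}$ by an appropriate $\gamma'\in A_t$ maps this element onto $\beta$ exactly, after first choosing the Block 3 commutator (when needed) so as to land in the correct one of the at most two $A_t$-conjugacy classes carrying $\beta$'s cycle type. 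Concatenating the three blocks yields a valid sequence of length $O(1)+(\log k+O(1))+O(1)=\log k+O(1)$.

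I expect Block 3, and with it the unusual shape of the admissible targets $\beta$, to be the main obstacle. Since commutators and conjugations never leave $A_t$, a lone even-length cycle --- an odd permutation --- is simply unreachable, which is precisely why the lemma insists that even cycles occur in pairs and odd cycles alone; and even within these families, a cycle type of distinct odd lengths (possible for an odd $k$-cycle with a single fixed point, i.e.\ $k=t-1$) splits into two $A_t$-conjugacy classes, so one must verify that the Block 3 commutator can be steered into the correct class before conjugation finishes the job. This, together with the parity needed in Block 2, is the work done by the hypothesis $t\equiv 2\pmod{4}$. The second delicate point is insisting on \emph{exactly} $m$ transpositions in Block 2 rather than the next power of two, since overshooting could demand more than the $t$ available points when $\beta$ moves nearly all of $[t]$; hence the growth procedure needs the extra non-doubling step mentioned above.
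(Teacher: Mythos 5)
Your three-block plan is exactly the structure the paper uses (Lemma~\ref{lem:to-doub-transp}, Lemma~\ref{lem:growtranspositions}, and Lemmas~\ref{lem:oddcycles}--\ref{lem:evencycles}), so the high-level approach is right, but Block~1 has a concrete gap. Your first sub-case commutes $\alpha$ against $(x\ y)(a\ b)$ and relies on the factor $(a\ b)$ commuting with $\alpha$ so that $[\alpha,(x\ y)(a\ b)]=[\alpha,(x\ y)]$. A transposition $(a\ b)$ commutes with $\alpha$ only if $\alpha$ either fixes both $a,b$ or swaps them; so you are implicitly assuming $\alpha$ has at least two fixed points or contains a $2$-cycle. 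Neither holds for, e.g., $\alpha=(1\ 2\ 3)(4\ 5\ 6)\in A_6$ or a $5$-cycle in $A_6$ (more generally any $\alpha$ with no $2$-cycles and at most one fixed point), and your second sub-case (``$\alpha$ is a product of disjoint transpositions'') does not cover these either. Indeed, with $\alpha=(1\ 2\ 3)(4\ 5\ 6)$ and $\gamma=(1\ 2)(4\ 5)$ one gets $[\alpha,\gamma]=(1\ 3\ 2)(4\ 6\ 5)$, not a $3$-cycle or double transposition. The paper's Lemma~\ref{lem:to-doub-transp} closes exactly this hole with dedicated cases for ``two disjoint $3$-cycles'' and for ``a $(\geq 5)$-cycle,'' both commuting against a short permutation drawn from inside $\alpha$'s support rather than from fixed points.

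Two smaller issues. First, you say Block~2 should land on exactly $m=(k-1)/2$ (resp.\ $m=k/2$) disjoint transpositions and attribute the parity bookkeeping to $t\equiv 2\pmod 4$; but the parity of $m$ is governed by $k$ alone, and if $m$ is odd a product of $m$ disjoint transpositions is not even in $A_t$. You need to allow $m$ or $m-1$ (whichever is even) and spend one extra commutator in Block~3 in the off-by-one case, which is what Lemmas~\ref{lem:oddcycles}--\ref{lem:evencycles} do. The actual role of $t\equiv 2\pmod 4$ is to guarantee the one or two fresh points ($c$, or $e_1,e_2$) needed when assembling the final cycle(s). Second, your plan to conjugate \emph{after} the Block~3 commutator and ``steer'' that commutator into the correct $A_t$-conjugacy class of $\beta$ is not obviously achievable and you do not argue it; the paper instead conjugates the transposition product \emph{before} commutating, exploiting that products of $m$ disjoint transpositions form a single $A_t$-conjugacy class, which sidesteps the steering question entirely.
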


Any function given by this lemma yields a 1-local function $f : (A_t)^m \to (A_t)^{m \cdot 2^\ell}$ computable in depth $O(\log t)$ that maps $\alpha$-products to $\beta$-products while preserving the identity. 

\begin{proof}[Proof of Theorem \ref{thm:localmap}]
We prove the case $m = 1$, but the argument extends immediately to any $m$. Fix $\alpha,\beta \in A_t$, and consider the unique representation of $\beta$ as a set of disjoint cycles $C = \{\sigma_1,\ldots,\sigma_s\} \subset S_t$. (Here $C$ contains only those cycles with length $> 1$.) The idea is to apply Lemma \ref{lem:convert} to each cycle $\sigma \in C$, obtaining $f' : A_t \to (A_t)^{O(|\sigma|)}$ such that $\prod_i f'(\alpha)_i = \sigma$. Then letting $f$ output the concatenation of these $f'$, the resulting function maps $\alpha$-products to $\beta$-products while preserving the identity. Further, its output length is $\sum_{\sigma \in C} O(|\sigma|) = O(t)$.

The only technical complication has to do with cycles of even length: if $\sigma$ is a cycle of even length then it is an odd permutation, and so there can be no composition of functions from \conj\ and \comm\ that maps $\alpha \mapsto \sigma$.  We handle this by pairing the cycles of even length, which we can do because $C$ must contain an even number of cycles of even length as each is an odd permutation and $\beta$ is an even permutation. So, for each such pair of even-length cycles $\sigma,\sigma' \in C$, we instead apply Lemma \ref{lem:convert} to get a function $f'$ such that $\prod_i f'(\alpha)_i = \sigma \cdot \sigma'$.
\end{proof}

\subsection{Proof of Lemma \ref{lem:convert}}
To prove Lemma \ref{lem:convert}, we implement the procedure described in \S\ref{sec:techniques}. Namely, we first use $\leq 2$ commutations to convert a given $\alpha \in A_t$ to a double-transposition, then $\log k$ commutations to convert to a product of roughly $k/2$ disjoint transpositions, and finally $\leq 2$ commutations and 1 conjugation to convert to either a cycle of odd length $k$ or to the product of two disjoint even-length cycles with total length $k$.

\begin{lemma} \label{lem:to-doub-transp}
Let $t \geq 4$. For every $\id \neq \alpha \in A_t$, there exist $\gamma_1,\gamma_2 \in A_t$ such that $[[\alpha,\gamma_1],\gamma_2]$ is a double-transposition. Further, each $\gamma_i$ is either a double-transposition or a 3-cycle.
\end{lemma}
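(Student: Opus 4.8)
The plan is to prove this by a case analysis on the cycle type of $\alpha$, in each case exhibiting an explicit $\gamma_1$ (and if necessary $\gamma_2$) of the stated restricted form, and verifying the claimed commutator identity by a short direct computation. The guiding principle, as foreshadowed in the techniques section, is that commutating a permutation with a small permutation $\gamma$ supported on a few of the points that $\alpha$ moves "extracts" a controlled piece of $\alpha$'s cycle structure: since $[\alpha,\gamma] = \alpha (\gamma \alpha^{-1}\gamma^{-1})$ and $\gamma\alpha^{-1}\gamma^{-1}$ has the same cycle type as $\alpha^{-1}$, the commutator is a product of $\alpha$ with a "relabeled inverse copy" of $\alpha$, and if $\gamma$ moves only a few points the two copies agree off a small set and cancel there, leaving something supported on a bounded number of points. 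The key first step is therefore: find, for each cycle type, a good choice of $\gamma_1$ (a $3$-cycle or double-transposition on $\leq 4$ of the moved points of $\alpha$) so that $[\alpha,\gamma_1]$ is supported on at most, say, $6$ points and is a nonidentity even permutation there; then a second commutation with a $\gamma_2$ of the same restricted form collapses whatever small even permutation we have to a genuine double-transposition.

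Concretely I would split into cases according to the longest cycle in $\alpha$'s disjoint cycle representation, reusing the worked examples already given in the text. If $\alpha$ contains a cycle of length $\geq 4$, write four consecutive points of it as $a,b,c,d$ (so $\alpha: a\mapsto b\mapsto c\mapsto d$); taking $\gamma_1 := (a\ b)(c\ d)$ one computes $[\alpha,\gamma_1] = (a\ c)(b\ d)$ directly — here already one commutation suffices, with $\gamma_1$ a double-transposition. If $\alpha$ consists only of $3$-cycles and transpositions, there are a few sub-cases: (i) $\alpha$ has a $3$-cycle $(a\ b\ c)$ — then conjugating-style commutation with a suitable $3$-cycle or double-transposition on $\{a,b,c\}$ and one further point gives something supported on $\leq 4$ points whose even part is a double-transposition (possibly after a second commutation to clean up a leftover $3$-cycle); (ii) $\alpha$ has at least two disjoint transpositions $(a\ b)(c\ d)$ — then $\alpha$ restricted to $\{a,b,c,d\}$ already is (or conjugates in one step to) a double-transposition, and since $t\geq 4$ there is room; (iii) the only remaining possibility, that $\alpha$ is a single transposition, is impossible because $\alpha\in A_t$ is even. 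The second commutation is needed only in those sub-cases where one commutation lands on a $3$-cycle rather than a double-transposition — and a $3$-cycle $(x\ y\ z)$ is easily sent to a double-transposition by one more commutation with a $3$-cycle or double-transposition on $\{x,y,z\}$ plus one fresh point (e.g. $[(x\ y\ z),(x\ y)(z\ w)]$ or a similar explicit choice), using that $t\geq 4$ gives us the extra point $w$.

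The main obstacle I anticipate is not any single computation but making the case analysis genuinely exhaustive while keeping every auxiliary permutation within the "double-transposition or $3$-cycle" and "$\leq 8$ moved points total across $\gamma_1,\gamma_2$" budget — in particular handling small $t$ (e.g. $t=4$, where $A_4$ forces $\alpha$ to be a $3$-cycle or a double-transposition and there is essentially no room to spare) and handling the all-transpositions case where $\alpha$ could be a product of many transpositions but we must only touch four of them. I would organize the argument as: first reduce to the case that $\alpha$ moves at most a bounded number of points by choosing $\gamma_1$ supported on points of a single cycle of $\alpha$ (so the commutator is supported on $\leq$ twice that cycle's length, independent of the rest of $\alpha$); then finish with an explicit small-permutation lemma saying every nonidentity even permutation supported on $\leq 6$ points can be converted to a double-transposition by one commutation with a $3$-cycle or double-transposition, verified by inspecting the handful of relevant cycle types ($3$-cycle, double-transposition, $5$-cycle, $3$-cycle${}\times{}3$-cycle on $6$ points, etc.). I expect the verification of that finishing lemma to be the most tedious part, but each instance is a one-line multiplication of explicit permutations.
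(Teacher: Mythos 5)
Your overall approach---case analysis on the cycle type of $\alpha$, with explicit small $\gamma_i$ and direct commutator computations---is the same as the paper's. But there is a concrete error in your case for long cycles that causes the argument to fail.

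You claim that if $\alpha$ contains a cycle of length $\geq 4$, then writing four consecutive points of it as $a,b,c,d$ and taking $\gamma_1 = (a\ b)(c\ d)$ gives $[\alpha,\gamma_1] = (a\ c)(b\ d)$, so one commutation already suffices. This is only true when the cycle has length \emph{exactly} $4$. If $\alpha$ contains a cycle $(a\ b\ c\ d\ e\ \cdots)$ of length $\geq 5$, then $\alpha\gamma_1\alpha^{-1} = (b\ c)(d\ e)$, and one computes $[\alpha,\gamma_1] = (b\ c)(d\ e)\cdot(a\ b)(c\ d) = (a\ c\ e\ d\ b)$, a $5$-cycle, not a double-transposition. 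The paper keeps these as separate cases: for a $4$-cycle one commutation with $(a\ b)(c\ d)$ works, but for a $(k\geq 5)$-cycle it instead commutates with a $3$-cycle $(a_2\ a_3\ a_4)$ to cut the long cycle down to the $3$-cycle $(a_1\ a_4\ a_3)$, and then finishes with a second commutation as in the $3$-cycle case. Your proposal has no correct treatment of this case.

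A second, related problem is your proposed finishing lemma that ``every nonidentity even permutation supported on $\leq 6$ points can be converted to a double-transposition by one commutation with a $3$-cycle or double-transposition.'' You leave this unverified, and it is in fact false when no extra points are available: a $5$-cycle in $A_5$ cannot be sent to a double-transposition by a single such commutation (all choices of $\gamma$ that are $3$-cycles or double-transpositions in $A_5$ yield either a $3$-cycle or a $5$-cycle). The correct route in that case genuinely requires two commutations, as in the paper. Also note that your ``reduce $\alpha$ to something supported on at most twice a cycle's length'' step does not bound the support by $6$ when $\alpha$ has a long cycle, so the finishing lemma as stated would not even be applicable without the (erroneous) shortcut for $\geq 4$-cycles. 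To repair the argument you would need to split the $\geq 4$ case into exactly-$4$ and $\geq 5$, handle $\geq 5$ by first commutating down to a $3$-cycle, and then reuse the $3$-cycle case---which is precisely what the paper does.
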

\begin{proof}
We consider five cases based on $\alpha$'s cycle structure. In all cases except the last, in fact only one commutation is needed to obtain a double-transposition, but we can use two by noting that $[(a\ b)(c\ d), (a\ b\ c)] = (a\ d)(b\ c)$.
\begin{enumerate}
\item If $\alpha$ contains a double-transposition $(a\ b)(c\ d)$, then $[\alpha, (a\ b\ c)] = (a\ d)(b\ c)$.

\item If $\alpha$ is a 3-cycle $(a\ b\ c)$, then $[\alpha, (a\ b)(c\ d)] = (a\ d)(b\ c)$.

\item If $\alpha$ contains two 3-cycles $(a\ b\ c)(d\ e\ f)$, then $[\alpha, (a\ d)(c\ f)] = (a\ d)(b\ e)$.

\item If $\alpha$ contains a 4-cycle $(a\ b\ c\ d)$, then $[\alpha, (a\ b)(c\ d)] = (a\ c)(b\ d)$.

\item If $\alpha$ contains a $(k \geq 5)$-cycle $(a_1\ \cdots\ a_k)$, then $[\alpha, (a_2\ a_3\ a_4)] = (a_1\ a_4\ a_3)$ and apply case 2.
\end{enumerate}
\end{proof}

\begin{lemma} \label{lem:growtranspositions}
For every double-transposition $\alpha \in A_t$ and even $k \leq t/2$, there exist $\gamma_1,\ldots,\gamma_{\log k} \in A_t$ such that $[[\cdots[[\alpha,\gamma_1], \gamma_2], \cdots], \gamma_{\log k}]$ is a product of $k$ disjoint transpositions.
\end{lemma}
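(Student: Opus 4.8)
The plan is to prove Lemma \ref{lem:growtranspositions} by induction on the number of transpositions, using a single commutation to double the count of disjoint transpositions at each step. The base case is trivial: a double-transposition is already a product of $k=2$ disjoint transpositions. For the inductive step, suppose we have produced a permutation that is a product of $2j$ disjoint transpositions, say $\alpha_j = (a_1\ b_1)(a_2\ b_2)\cdots(a_{2j}\ b_{2j})$ on $4j$ distinct points of $[t]$. Since $k \le t/2$, we have $4k \le 2t \le t$ only when... wait, more carefully: we need $2k$ transpositions at the end, i.e.\ $4k$ moved points, and the hypothesis $k \le t/2$ gives $4k \le 2t$, which is not quite enough; but the lemma only asks for $k$ transpositions total (not $2k$), so at the final stage we have $2k$ moved points $\le t$, and at each intermediate stage even fewer, so there are always enough fresh points available. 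I would pick $2j$ fresh points $e_1,\ldots,e_{2j}$ disjoint from the $\{a_i,b_i\}$, set $\gamma := (a_1\ e_1)(a_2\ e_2)\cdots(a_{2j}\ e_{2j})$, and compute $[\alpha_j,\gamma]$.

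The key computation is the commutator identity: I expect $[\alpha_j,\gamma] = (a_1\ b_1)(e_1\ ?)\cdots$, mirroring the worked example $[(a\ b)(c\ d),(a\ e)(b\ f)(c\ g)(d\ h)] = (a\ b)(c\ d)(e\ f)(g\ h)$ from \S\ref{sec:techniques}. The cleanest way is to work transposition-by-transposition: because $\alpha_j$ and $\gamma$ act on the same points only through the shared labels $a_i$, one can check that $[\alpha_j,\gamma] = \prod_i [(a_i\ b_i),(a_i\ e_i)]$ restricted appropriately, and $[(a\ b),(a\ e)] = (a\ b)(a\ e)(a\ b)(a\ e)$. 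Let me just verify that last one: $(a\ b)(a\ e)$ sends $a\to e$ (apply $(a\ e)$ first: $a\to e$; then $(a\ b)$: $e\to e$), $e\to a\to b$, $b\to b\to a$, so $(a\ b)(a\ e) = (a\ e\ b)$ reading right-to-left — a 3-cycle. Then $[(a\ b),(a\ e)] = (a\ e\ b)(a\ b\ e) $... I would instead directly verify the global identity $[\alpha_j,\gamma] = (a_1\ e_1)(b_1\ \cdot)\cdots$ by the standard fact that $\gamma \alpha_j^{-1}\gamma^{-1}$ is obtained from $\alpha_j^{-1} = \alpha_j$ by relabeling each $a_i \mapsto e_i$, so $\gamma\alpha_j\gamma^{-1} = (e_1\ b_1)(e_2\ b_2)\cdots(e_{2j}\ b_{2j})$, and then $[\alpha_j,\gamma] = \alpha_j \cdot (\gamma\alpha_j\gamma^{-1}) = (a_1\ b_1)\cdots(a_{2j}\ b_{2j}) \cdot (e_1\ b_1)\cdots(e_{2j}\ b_{2j})$. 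Multiplying the matched pair $(a_i\ b_i)(e_i\ b_i)$ gives a 3-cycle $(a_i\ b_i\ e_i)$ or its inverse, which is \emph{not} a product of disjoint transpositions — so this naive pairing is wrong.

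The fix, matching the example in \S\ref{sec:techniques}, is to use $\gamma$ that pairs \emph{both} endpoints of each transposition with fresh points: from $\alpha = (a\ b)(c\ d)$ the example uses $\gamma = (a\ e)(b\ f)(c\ g)(d\ h)$ to reach $(a\ b)(c\ d)(e\ f)(g\ h)$. So in general, from $\alpha_j$ with $4j$ moved points I would introduce $4j$ fresh points, and for each transposition $(a_i\ b_i)$ with fresh partners $e_i, f_i$ take the $\gamma$-transpositions $(a_i\ e_i)(b_i\ f_i)$; then $\gamma \alpha_j \gamma^{-1} = \prod_i (e_i\ f_i)$ (relabel $a_i\mapsto e_i$, $b_i\mapsto f_i$), so $[\alpha_j,\gamma] = \prod_i (a_i\ b_i) \cdot \prod_i (e_i\ f_i)$, a product of $4j$ disjoint transpositions. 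This doubles $2j \mapsto 4j$ transpositions per commutation, so $\log k$ commutations suffice to go from $2$ to $k$ (recall $k$ is even, hence a power-of-two factor argument: write the doubling sequence $2, 4, 8, \ldots$; if $k$ is not a power of two, at the last step stop doubling and instead pair only enough transpositions — i.e.\ use a $\gamma$ that grows only a chosen subset — to land on exactly $k$, which still takes $\le \log k$ commutations). The main obstacle I anticipate is the bookkeeping for $k$ not a power of two and confirming that $2k \le t$ guarantees fresh points are always available at every intermediate stage (since intermediate stages move $< 2k$ points), together with verifying $[\alpha_j,\gamma] \in A_t$ throughout, which holds automatically as $\alpha_j, \gamma \in A_t$.
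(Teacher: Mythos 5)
Your core doubling step is exactly the paper's: from $\alpha_j = \prod_i (a_i\,b_i)$, take $\gamma = \prod_i (a_i\,e_i)(b_i\,f_i)$ with fresh $e_i,f_i$, so that $\gamma\alpha_j\gamma^{-1} = \prod_i (e_i\,f_i)$ and $[\alpha_j,\gamma] = \prod_i (a_i\,b_i)\cdot\prod_i(e_i\,f_i)$, doubling the number of disjoint transpositions. That part is right (and your self-correction away from pairing only the $a_i$'s was the right catch). The gap is in how you handle $k$ that is not a power of two. You propose, at the last step, to ``use a $\gamma$ that grows only a chosen subset'' --- meaning $\gamma$ acts as identity on the transpositions you do not wish to double. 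But then for a non-chosen transposition $(a\,b)$ you get $\gamma(a\,b)\gamma^{-1} = (a\,b)$, and in the commutator $\alpha\cdot\gamma\alpha^{-1}\gamma^{-1}$ these two copies of $(a\,b)$ multiply to the identity: the non-chosen transpositions are \emph{killed}, not preserved. So a step from $n$ transpositions to $m$ transpositions this way requires choosing $m/2$ of the current ones and introducing $m$ fresh points disjoint from the full current support of $2n$ points, i.e.\ $2n + m \le t$. This can fail under the lemma's hypotheses: take $t = 12$, $k = 6$. To reach $6$ you must come from some even $n_{\ell-1}\ge 3$, hence $n_{\ell-1}\ge 4$, and then $2\cdot 4 + 6 = 14 > 12$. (Your remark that ``intermediate stages move $< 2k$ points'' overlooks that the fresh points must lie \emph{outside} the current support, so the binding constraint is current-plus-fresh $\le t$, not each separately.)

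The paper avoids this by \emph{maintaining} rather than discarding: for a pair of transpositions $(a\,b)(c\,d)$ that you do not want to double, include the $3$-cycle $(a\,b\,c)$ in $\gamma$, so that $[(a\,b)(c\,d),(a\,b\,c)] = (a\,d)(b\,c)$ stays a double-transposition on the same four points. Then a step from $n$ to $m$ transpositions doubles $m-n$ of them (consuming $2(m-n)$ fresh points) and maintains the remaining $n - (m-n)$ in pairs (consuming no fresh points), so the support after the step has size exactly $2m \le 2k \le t$; this always fits. You should also note that the $\gamma$ so constructed is even (the doubling part contributes an even number of transpositions and $3$-cycles are even), so $\gamma \in A_t$. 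With that replacement, the schedule $n_i = \min(2^{i+1}, k)$ reaches $k$ in $\le \log k$ commutations as you intended.
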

\begin{proof}
Given $\alpha = (a\ b)(c\ d)$, we can double the number of transpositions by commutating with $\gamma = (a\ e)(b\ f)(c\ g)(d\ h)$ to get $[\alpha,\gamma] =(a\ b)(c\ d)(e\ f)(g\ h)$.  Repeating this $\log(k) - 1$ times (with appropriate modifications to $\gamma$) grows the number of transpositions from 2 to $k$. To handle $k$ that is not a power of 2, note that any $(a\ b)(c\ d)$ can be ``maintained'' rather than doubled by instead commutating with $(a\ b\ c)$ as in the proof of Lemma \ref{lem:to-doub-transp}.
\end{proof}

We now show how to commutate a product of disjoint transpositions to obtain either an odd-length cycle (Lemma \ref{lem:oddcycles}) or the product of two even-length cycles (Lemma \ref{lem:evencycles}). 

\begin{lemma} \label{lem:oddcycles}
Let $t$ be even and $\beta \in A_t$ be any cycle of odd length $5 \leq k \leq t-1$. For any $\alpha \in A_t$ that is the product of either $(k-1)/2$ or $(k-3)/2$ disjoint transpositions (depending on which is even), there are $\gamma_1,\gamma_2,\gamma_3 \in A_t$ such that either $[\gamma_1^{-1}\alpha\gamma_1,\gamma_2] = \beta$ or $[[\gamma_1^{-1}\alpha\gamma_1,\gamma_2],\gamma_3] = \beta$.
\end{lemma}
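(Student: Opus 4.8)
The plan is to carry out the ``combine disjoint transpositions into an odd cycle'' step of the framework from \S\ref{sec:techniques}. Write $\beta=(b_1\ \cdots\ b_k)$ and let $j$ denote the number of transpositions in $\alpha$, so $j$ is even and either $2j+1=k$ (when $j=(k-1)/2$, i.e.\ $k\equiv 1\pmod 4$) or $2j+1=k-2$ (when $j=(k-3)/2$, i.e.\ $k\equiv 3\pmod 4$). The workhorse is the ``combine transpositions into a cycle'' identity from \S\ref{sec:techniques} read backwards: for any $(2j+1)$-cycle $\delta=(d_1\ \cdots\ d_{2j+1})$, setting $\alpha'=(d_1\ d_{2j})(d_2\ d_{2j-1})\cdots(d_j\ d_{j+1})$ and $\gamma_2=(d_1\ d_{2j}\ d_2\ d_{2j-1}\ \cdots\ d_j\ d_{j+1}\ d_{2j+1})$ yields $[\alpha',\gamma_2]=\delta$, with $\gamma_2\in A_t$ since $2j+1$ is odd. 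Because the cycle type $2^j1^{t-2j}$ of $\alpha$ contains a $2$-cycle (as $j\ge 2$), by the conjugacy facts in \S\ref{sec:techniques} $\alpha$ is conjugate to $\alpha'$ via some $\gamma_1\in A_t$; hence $[\gamma_1^{-1}\alpha\gamma_1,\gamma_2]$ can be made any prescribed $(2j+1)$-cycle.

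If $2j+1=k$, take $\delta=\beta$ above; this produces $\gamma_1,\gamma_2\in A_t$ with $[\gamma_1^{-1}\alpha\gamma_1,\gamma_2]=\beta$, giving the first alternative (any $\gamma_3$ works). If $2j+1=k-2$ (so $k\ge 7$, $k-2\ge 5$), the above produces $[\gamma_1^{-1}\alpha\gamma_1,\gamma_2]=\delta$ for any $(k-2)$-cycle $\delta$ we wish, and it remains to find $\gamma_3\in A_t$ with $[\delta,\gamma_3]=\beta$. Writing $[\delta,\gamma_3]=\delta\cdot\gamma_3\delta^{-1}\gamma_3^{-1}$ and noting that $\gamma_3\delta^{-1}\gamma_3^{-1}$ ranges over all conjugates of the $(k-2)$-cycle $\delta^{-1}$, such a $\gamma_3$ exists precisely when we can choose $\delta$ so that $\delta^{-1}\beta$ is again a $(k-2)$-cycle, i.e.\ so that $\beta$ is a product of two $(k-2)$-cycles; moreover the conjugator is then available in $A_t$ because the cycle type $(k-2,1^{t-k+2})$ has $t-k+2\ge 2$ fixed points (using $k\le t-1$) and so does not split between $S_t$ and $A_t$.

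To exhibit a suitable $\delta$ I would take the $(k-2)$-cycle on $\{b_1,\ldots,b_{k-2}\}$ defined by $b_1\mapsto b_2\mapsto b_3\mapsto b_{k-2}$, then $b_i\mapsto b_{i-1}$ for each $5\le i\le k-2$, and finally $b_4\mapsto b_1$. Since $k\ge 7$, this $\delta$ agrees with $\beta$ at exactly the two points $b_1,b_2$, so $\delta^{-1}\beta$ moves exactly the $k-2$ points $\{b_3,\ldots,b_k\}$; a direct trace then shows $\delta^{-1}\beta$ is a single $(k-2)$-cycle on that set. Conjugating $\delta^{-1}$ to $\delta^{-1}\beta$ by an appropriate $\gamma_3\in A_t$ gives $[[\gamma_1^{-1}\alpha\gamma_1,\gamma_2],\gamma_3]=[\delta,\gamma_3]=\delta\cdot(\delta^{-1}\beta)=\beta$, the second alternative.

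The main obstacle is this last point — that every odd $k$-cycle with $k\ge 7$ is a product of two $(k-2)$-cycles — and specifically choosing $\delta$ so that $\delta^{-1}\beta$ neither fixes too many points nor breaks into several short cycles. The naive choice $\delta=(b_1\ b_2\ \cdots\ b_{k-2})$ fails: it agrees with $\beta$ at every point of its support except $b_{k-2}$, so $\delta^{-1}\beta$ collapses to the $3$-cycle $(b_{k-2}\ b_{k-1}\ b_k)$. One must instead engineer $\delta$ to disagree with $\beta$ at all but exactly two of its support points while keeping $\delta^{-1}\beta$ connected, and checking that the $\delta$ above does this for every odd $k\ge 7$ is the one real computation in the proof. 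Everything else is routine: the parity bookkeeping that puts $\alpha'$, $\gamma_2$, $\gamma_3$ in $A_t$, and the point counts showing all the cycles fit inside $[t]$, all follow from the observation that every cycle type arising here either contains a $2$-cycle or has at least two fixed points.
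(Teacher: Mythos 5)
Your proposal is correct and follows the same high-level strategy as the paper: one pre-conjugation $\gamma_1$ to reposition $\alpha$, one commutation with $\gamma_2$ to build an odd-length cycle from a product of disjoint transpositions, and (when $k \equiv 3 \pmod 4$) a second commutation with $\gamma_3$; and you correctly identify that the $A_t$ conjugacy-class splitting for $k$-cycles with $k = t-1$ must be circumvented by conjugating $\alpha$ (whose transposition cycle type does not split) rather than by a final conjugation of the output $k$-cycle. The one genuine difference is in how the second commutation is organized. The paper first converts $\alpha$ to an abstract $(k-2)$-cycle $\mu = (a_1\ \cdots\ a_{k-2})$ and then constructs a second $(k-2)$-cycle $\pi$ on the $a_i$'s together with three \emph{fresh} points $c_1,c_2,c_3$ so that $\mu\pi$ is some $k$-cycle, deferring entirely to the initial conjugation to make that $k$-cycle equal $\beta$. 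You instead factor the target directly as $\beta = \delta \cdot (\delta^{-1}\beta)$, engineering the cyclic order of $\delta$ on $\{b_1,\ldots,b_{k-2}\}$ (agreeing with $\beta$ at exactly $b_1,b_2$) so that $\delta^{-1}\beta$ is again a single $(k-2)$-cycle, and then $\gamma_3$ is transparently the conjugator taking $\delta^{-1}$ to $\delta^{-1}\beta$. Your version stays within $\beta$'s support and makes the role of each $\gamma_i$ fully explicit; the paper's version cleanly separates the ``reach any $k$-cycle'' step from the ``hit exactly $\beta$'' step. Both ultimately rest on the same key observation — that a $k$-cycle ($k\ge 7$, $k$ odd) is a product of two $(k-2)$-cycles — which you state and support by a specific choice of $\delta$; I checked that your $\delta$ does give a single $(k-2)$-cycle for $\delta^{-1}\beta$, so the argument goes through, and the remaining parity/fixed-point checks you cite (that $\gamma_1,\gamma_2,\gamma_3$ can all be taken in $A_t$) are all sound.
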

\begin{proof}
We first show that $\alpha$ can be converted to a $k$-cycle with $\leq 2$ commutations. Afterwards we observe that by first using 1 conjugation, these commutations can be made to produce the specific $k$-cycle $\beta$.

If $(k-1)/2$ is even, then let $\alpha := (a_1\ b_1)\cdots(a_{k'}\ b_{k'})$ be any product of $k' := (k-1)/2$ disjoint transpositions. Choosing $\gamma := (a_1\ b_1\ a_2\ b_2\ \cdots\ a_{k'}\ b_{k'}\ c) \in A_t$, where $c$ is distinct from all $k-1$ points permuted by $\alpha$, we get that
$$[\alpha,\gamma] = (a_1\ \cdots\ a_{k'}\ b_{k'}\ \cdots\ b_1\ c)$$
is a $k$-cycle. (We only need one commutation in this case.)

If instead $(k-3)/2$ is even (so $k \geq 7$), then let $\alpha$ be any product of $(k-3)/2$ disjoint transpositions. First, commutate once as above to get a $(k-2)$-cycle $$\mu := (a_1\ \cdots\ a_{k-2}).$$ We now show that there is another $(k-2)$-cycle $\pi \in A_t$ such that $\mu \pi$ is a $k$-cycle. This implies that we can convert $\mu$ to a $k$-cycle with one more commutation, namely by commutating with $\gamma \in A_t$ such that $\gamma \mu^{-1} \gamma^{-1} = \pi$. (Such $\gamma$ must exist because the set of $(k-2)$-cycles forms a conjugacy class in $A_t$ when $t \geq k+1$.) Take the $(k-2)$-cycle $$\pi := (a_1\ c_1\ c_2\ c_3\ a_3\ a_4\ \cdots\ a_{k-5}\ a_2)$$ where $c_1,c_2,c_3$ are distinct from the $k-2 \leq t-3$ points permuted by $\mu$. Then letting $a_i \stackrel{\mathrm{by}\ 2}{\cdots} a_j$ denote the sequence $a_i\ a_{i+2}\ a_{i+4}\ \cdots\ a_j$, we have that $$\mu \pi = (a_2 \stackrel{\mathrm{by}\ 2}{\cdots} a_{k-5}\ a_{k-4}\ a_{k-3}\ a_{k-2}\ c_1\ c_2\ c_3\ a_3 \stackrel{\mathrm{by}\ 2}{\cdots} a_{k-6})$$ is a $k$-cycle (permuting points $a_2, \ldots, a_{k-2},c_1,c_2,c_3$).

Having converted $\alpha$ to a $k$-cycle with $\leq 2$ commutations, one might hope to then use 1 conjugation to convert to the specific $k$-cycle $\beta$. However when $k = t-1$, the $k$-cycles form two distinct conjugacy classes in $A_t$ so we cannot do this. We instead note that the points permuted by the $k$-cycle depend directly on the points permuted by $\alpha$ (and the extra points $c_i$), and that products of an equal number of disjoint transpositions are conjugate in $A_t$. So by {\em first} using 1 conjugation to modify $\alpha$ appropriately, the above commutations yield $\beta$.
\end{proof}

\begin{lemma} \label{lem:evencycles}
Let $t \equiv 2\, (\bmod\ 4)$ and $\beta \in A_t$ be any product of two disjoint cycles of even lengths $k_1,k_2$. Denote $k = k_1 + k_2$. For any $\alpha \in A_t$ that is the product of either $k/2$ or $k/2 - 1$ disjoint transpositions (depending on which is even), there exist $\gamma_1, \gamma_2 \in A_t$ such that $\gamma_2^{-1} \cdot [\alpha,\gamma_1] \cdot \gamma_2 = \beta$.
\end{lemma}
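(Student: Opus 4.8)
The plan is to mirror the structure of Lemma~\ref{lem:oddcycles}: one commutation to reach the correct cycle type, then one conjugation to land on $\beta$ itself. The conjugation step is essentially free once the cycle type is right, because the cycle type $(k_1,k_2,1^{t-k})$ (a $k_1$-cycle, a $k_2$-cycle, and $t-k$ fixed points) contains an even-length cycle and hence forms a single conjugacy class in $A_t$; so it suffices to exhibit $\gamma_1\in A_t$ for which $[\alpha,\gamma_1]$ has exactly this cycle type, after which some $\gamma_2\in A_t$ conjugates $[\alpha,\gamma_1]$ to $\beta$. In place of the explicit ``wrap all transpositions into one long cycle'' move used in Lemma~\ref{lem:oddcycles}, I would use the following elementary description of a product of two involutions: if $\iota_1,\iota_2\in S_t$ are each products of disjoint transpositions, form the edge-$2$-colored multigraph $G$ on $[t]$ whose color-$1$ edges are the transpositions of $\iota_1$ and whose color-$2$ edges are those of $\iota_2$; then each path component of $G$ with $e$ edges contributes an $(e+1)$-cycle to $\iota_1\iota_2$, each cycle component with $2\ell$ edges contributes two $\ell$-cycles, and each edge common to $\iota_1$ and $\iota_2$ contributes a pair of fixed points. (This has a short direct proof.) Since $\alpha$ is a product of $m$ disjoint transpositions it equals its own inverse, so $[\alpha,\gamma_1]=\alpha\,(\gamma_1\alpha\gamma_1^{-1})$ with $\gamma_1\alpha\gamma_1^{-1}$ again a product of $m$ disjoint transpositions; the task thus reduces to choosing a product $Q$ of $m$ disjoint transpositions whose colored union with $\alpha$ is a graph of the shape we want, and then taking $\gamma_1\in A_t$ with $\gamma_1\alpha\gamma_1^{-1}=Q$.

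To build $Q$, write $\alpha=(a_1\ b_1)\cdots(a_m\ b_m)$ and set $d:=m-(k/2-1)\in\{0,1\}$. I would designate the first $k_1/2$ transpositions of $\alpha$ to be the odd-position (color-$1$) edges of a path $P_1$ on $k_1$ vertices, which they already span, so that $P_1$ contributes its remaining $k_1/2-1$ color-$2$ edges to $Q$ and needs no new points; designate the next $k_2/2-1$ transpositions of $\alpha$ to be the color-$1$ edges of a path $P_2$ on $k_2$ vertices whose two endpoints are fresh points $c_1,c_2\in[t]\setminus\mathrm{supp}(\alpha)$, putting $P_2$'s $k_2/2$ color-$2$ edges into $Q$; and, if $d=1$, put the single leftover transposition of $\alpha$ into $Q$ verbatim, creating one doubled edge. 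A routine count then gives that $Q$ has $(k_1/2-1)+k_2/2+d=m$ disjoint transpositions and that $G$ consists of $P_1$, $P_2$, and $d$ doubled edges on pairwise disjoint vertex sets, so $[\alpha,\gamma_1]=\alpha Q$ has cycle type $(k_1,k_2,1^{t-k})$ exactly. Since $\alpha$ and $Q$ are products of equally many disjoint transpositions they are conjugate by some $\gamma_1^0\in S_t$; if $\gamma_1^0\notin A_t$, replace it by $\gamma_1^0\tau$ where $\tau$ is any transposition appearing in $\alpha$ (it commutes with $\alpha$, so still conjugates $\alpha$ to $Q$, and it flips parity), yielding $\gamma_1\in A_t$; note $m\ge 2$, so such a $\tau$ exists.

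The point that actually forces the hypothesis $t\equiv 2\pmod 4$, and the only case distinction in the argument, is the availability of these two fresh points, i.e.\ that $t-2m\ge 2$. If $m=k/2-1$ then $2m=k-2<k\le t$, so $t-2m\ge 2$. If $m=k/2$ then $k/2$ is even, hence $k\equiv 0\pmod 4$; combined with $t\equiv 2\pmod 4$ and $k\le t$ this forces $t\ge k+2=2m+2$. This second case is also exactly the one in which the doubled edge ($d=1$) is needed to balance the numbers of color-$1$ and color-$2$ edges, so the ``$k/2$ or $k/2-1$, whichever is even'' hypothesis and the room for $c_1,c_2$ are two faces of the same parity fact. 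I expect the main effort here to be pure bookkeeping: verifying the edge-count identities above (that $Q$ ends up with exactly $m$ disjoint transpositions, and that the designated edges really assemble into two vertex-disjoint paths of lengths $k_1-1$ and $k_2-1$ with the stated color multiplicities), plus writing out the one-paragraph proof of the graph description of $\iota_1\iota_2$; the genuine content of the lemma is entirely in the choice of $Q$.
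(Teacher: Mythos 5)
Your proposal is correct, and its overall structure matches the paper's: one commutation to reach the cycle type $(k_1,k_2,1^{t-k})$, then one conjugation, using that this cycle type (having even-length cycles) forms a single $A_t$-conjugacy class, and exploiting $[\alpha,\gamma_1]=\alpha\cdot(\gamma_1\alpha\gamma_1^{-1})$ since $\alpha$ is an involution. Where you diverge is in \emph{how} you verify that $\alpha Q$ has the desired cycle type. The paper writes an explicit formula for $\pi$ (your $Q$) in each of the two parity cases, multiplies it out by hand to exhibit the two cycles, and treats the small cases $k_1=k_2=2$ and $(k_1,k_2)=(2,4)$ as separate exceptions. You instead invoke the classical structure theorem for a product of two involutions via the $2$-colored graph (paths of $e$ edges give $(e+1)$-cycles, $2\ell$-cycles give two $\ell$-cycles, doubled edges give fixed points), which turns the verification into a transparent edge count and handles every value of $k_1,k_2\ge 2$ uniformly. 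In effect the paper's $\pi$ is the same object you build (the products $\prod(a_{i+1}\,b_i)$ are exactly the color-$2$ edges of your path $P_1$, and $(c_{k_2'}\,d_{k_2'})$ is your doubled edge), but your framing makes the bookkeeping conceptual rather than computational and avoids the case split. Two small remarks: you do not actually need the parity-fixing trick $\gamma_1^0\mapsto\gamma_1^0\tau$, since a cycle type containing a $2$-cycle is already a single $A_t$-conjugacy class so an even conjugator exists outright (though your trick is a perfectly valid direct proof of this); and your observation that $t\equiv 2\pmod 4$ together with $4\mid k$ forces $k\le t-2$, guaranteeing the two fresh points in exactly the case where a doubled edge is needed, is the same parity fact the paper uses to justify $e_1,e_2$.
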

\begin{proof}
We first use one commutation to convert $\alpha$ to the product of a $k_1$-cycle and a $k_2$-cycle, and then one conjugation to convert it to $\beta$ (which here we can do without the complication mentioned at the end of Lemma \ref{lem:oddcycles}). We assume that $k_1,k_2 \geq 4$, and at the end mention how to handle the two cases $k_1 = k_2 = 2$ and $k_1 = 2, k_2 = 4$.

If $k/2$ is even, let $$\alpha := (a_1\ b_1) \cdots (a_{k'_1}\ b_{k'_1})(c_1\ d_1) \cdots (c_{k'_2}\ d_{k'_2})$$ be any product of $k/2 = k_1' + k_2'$ disjoint transpositions, where $k'_1 := k_1/2$ and $k'_2 := k_2/2$. We will show that there exists $\pi \in A_t$ that is the product of $k/2$ disjoint transpositions such that $\alpha \pi$ is the product of a $k_1$-cycle and a $k_2$-cycle. As in Lemma \ref{lem:oddcycles}, this implies that we can convert $\alpha$ to the desired form by commutating with $\gamma \in A_t$ such that $\gamma \alpha^{-1} \gamma^{-1} = \pi$.
Define
$$\pi := \prod_{i=1}^{k'_1-1} (a_{i+1}\ b_i)\ \cdot\ \prod_{i=1}^{k'_2-2}(c_{i+1}\ d_i)\ \cdot\ (d_{k'_2-1}\ e_1)(c_{k'_2}\ d_{k'_2})(c_1\ e_2)$$
where $e_1,e_2 \in [t]$ are distinct from each point permuted by $\alpha$. (Such $e_1,e_2$ must exist because $4 | k$ and $t \equiv 2\, (\bmod\ 4)$, and thus $k \leq t-2$.) Then we have that
$$\alpha\pi = (a_1\ \cdots\ a_{k'_1}\ b_{k'_1}\ \cdots\ b_1)(c_1\ \cdots\ c_{k'_2-1}\ e_1\ d_{k'_2-1}\ \cdots\ d_1\ e_2)$$
is the product of two disjoint cycles of lengths $2k'_1 = k_1$ and $2k'_2 = k_2$.

If instead $k/2 - 1$ is even, let $$\alpha := (a_1\ b_1) \cdots (a_{k'_1}\ b_{k'_1})(c_1\ d_1) \cdots (c_{k'_2-1}\ d_{k'_2-1})$$ be any product of $k/2 - 1 = k'_1 + k'_2 - 1$ disjoint transpositions. This time we define
$$\pi := \prod_{i=1}^{k'_1-1} (a_{i+1}\ b_i)\ \cdot\ \prod_{i=1}^{k'_2-3}(c_{i+1}\ d_i)\ \cdot\ (c_1\ c_{k'_2-1})(d_{k'_2-2}\ e_1)(d_{k'_2-1}\ e_2)$$
where again $e_1,e_2$ are distinct from each of the $2(k/2 - 1) \leq t-2$ points permuted by $\alpha$ (here we use $k'_1 \geq 2$, $k'_2 \geq 3$). Then we have that
$$\alpha\pi = (a_1\ \cdots\ a_{k'_1}\ b_{k'_1}\ \cdots\ b_1)(c_1\ \cdots\ c_{k'_2-2}\ e_1\ d_{k'_2-2}\ \cdots\ d_1\ c_{k'_2-1}\ e_2\ d_{k'_2-1})$$
is the product of two disjoint cycles of lengths $2k'_1 = k_1$ and $2k'_2 = k_2$.

Finally we handle the two cases $k_1 = k_2 = 2$ and $k_1 = 2, k_2 = 4$. If $k_1 = k_2 = 2$ then $\alpha$ and $\beta$ are both double-transpositions and can be made equal with a single conjugation. Otherwise denote $\alpha = (a\ b)(c\ d)$, and note that there is another double-transposition $\pi = (c\ e)(d\ f)$ such that $\alpha \pi = (a\ b)(c\ f\ d\ e)$. Thus $\alpha$ can be commutated to the product of a 2-cycle and a 4-cycle, and a conjugation can make it equal to $\beta$.
\end{proof}

Lemma \ref{lem:convert} follows immediately from Lemmas \ref{lem:to-doub-transp}-\ref{lem:evencycles}. The only cases not explicitly covered by these are when $t = 2$ in which case Lemma \ref{lem:convert} is vacuous, and when $\beta$ is a 3-cycle (because Lemma \ref{lem:oddcycles} only handles $(k \geq 5)$-cycles). For the latter, note that by assumption we must have $t \geq 6$. We first convert $\alpha$ to a 5-cycle $(a_1\ \cdots\ a_5)$ using Lemmas \ref{lem:to-doub-transp}-\ref{lem:oddcycles}, then use one commutation with $(a_2\ a_3\ a_4)$ to convert to $(a_1\ a_4\ a_3)$, and finally use one conjugation to convert to $\beta$.

\section{Hardness for a single element} \label{sec:1234}
%
%
%
%

In this section we show that the $(1\ 2)(3\ 4)$-product problem is L-complete.

\begin{theorem} \label{thm:some-hard-alpha}
If for sufficiently large $t$ there is a circuit of depth $O(\log t)$ that decides the $(1\ 2)(3\ 4)$-product problem over $(A_t)^t$, then {\em NC}$^1 =$ \em{L}. 
\end{theorem}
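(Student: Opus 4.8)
The plan is to give a Turing (truth-table) reduction from the L-complete problem ``given $x \in (A_t)^t$, decide whether $\prod_i x_i = \id$'' (its L-hardness follows from Theorem~\ref{thm:cook-mckenzie} together with the embedding of \S\ref{sec:decide-id-hard}) to the $(1\ 2)(3\ 4)$-product problem, where the reduction is computable in depth $O(\log t)$ and makes $t^{O(1)}$ parallel, non-adaptive calls to a hypothetical depth-$O(\log t)$ circuit $C$ for the latter. A depth-$O(\log t)$ reduction composed with a depth-$O(\log t)$ oracle circuit again has depth $O(\log t)$, so together with the L-completeness of the $\id$-product problem this gives $\mathrm{L} \subseteq \nc^1$, hence $\nc^1 = \mathrm{L}$.

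The obstacle to overcome is that, given $x$ with unknown product $\alpha = \prod_i x_i$, the natural way to turn $x$ into a vector with product $(1\ 2)(3\ 4)$ (when $\alpha \ne \id$) or $\id$ (when $\alpha = \id$) depends on $\alpha$, which we cannot compute. Recall that for any fixed $\gamma \in A_t$ one can, in depth $O(\log t)$ and with constant-factor length blowup, turn $x$ into a vector with product $[\alpha,\gamma]$ (``commutate by $\gamma$'': output $(x_1,\ldots,x_t\gamma,x_t^{-1},\ldots,x_1^{-1}\gamma^{-1})$) or with product $\gamma^{-1}\alpha\gamma$ (``conjugate by $\gamma$''), and that both operations send $\id$-products to $\id$-products. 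By Lemma~\ref{lem:to-doub-transp}, for every $\alpha \ne \id$ there are $\gamma_1,\gamma_2 \in A_t$, each a $3$-cycle or a double-transposition (so each moving $\le 4$ points), with $[[\alpha,\gamma_1],\gamma_2]$ a double-transposition; and since all double-transpositions are conjugate in $A_t$, there is then a $\gamma_3 \in A_t$ moving $\le 8$ points with $\gamma_3^{-1}\cdot[[\alpha,\gamma_1],\gamma_2]\cdot\gamma_3 = (1\ 2)(3\ 4)$. The point that unlocks the reduction is that $\gamma_1,\gamma_2,\gamma_3$ each move only $O(1)$ points, so there are only $t^{O(1)}$ triples of permutations ``of this shape'', and we can afford to try all of them.

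Concretely, for each triple $g = (g_1,g_2,g_3) \in (A_t)^3$ with $g_1,g_2$ moving $\le 4$ points and $g_3$ moving $\le 8$ points, let $y^{(g)}$ be obtained from $x$ by commutating by $g_1$, then commutating by $g_2$, then conjugating by $g_3$, and finally padding with identities (and embedding $A_t \hookrightarrow A_T$ naturally) so that $y^{(g)}$ is a valid instance of the $(1\ 2)(3\ 4)$-product problem over $(A_T)^T$ for some $T = \Theta(t)$. Since each operation preserves $\id$-products, $\prod_i x_i = \id$ implies $\prod_i y^{(g)}_i = \id$ for every $g$; and if $\prod_i x_i = \alpha \ne \id$, then for the ``correct'' triple $g^\ast$ (which exists and has the required shape) we get $\prod_i y^{(g^\ast)}_i = (1\ 2)(3\ 4)$. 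The reduction outputs $\bigvee_g C\big(y^{(g)}\big)$. If $\alpha = \id$, every $y^{(g)}$ has product $\id$, a valid input on which $C$ must answer $0$, so the $\textsf{OR}$ is $0$; if $\alpha \ne \id$, then $y^{(g^\ast)}$ is a valid input with product $(1\ 2)(3\ 4)$, so $C(y^{(g^\ast)}) = 1$ and the $\textsf{OR}$ is $1$ --- and it is irrelevant that $C$ may behave arbitrarily on the other $y^{(g)}$ (which need not satisfy the promise), since those are only in play when the $\textsf{OR}$ is already forced to $1$. Thus the circuit decides ``$\prod_i x_i \ne \id$'', equivalently (as $\mathrm{L}$ is closed under complement) the $\id$-product problem. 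Computing all $y^{(g)}$ from $x$ is depth $O(\log t)$ (essentially array indexing, performed $t^{O(1)}$ times in parallel), the $t^{O(1)}$ copies of $C$ run in parallel in depth $O(\log t)$, and the final $\textsf{OR}$-tree over $t^{O(1)}$ bits has depth $O(\log t)$, giving total depth $O(\log t)$.

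The main obstacle --- and the genuinely new ingredient --- is establishing that the correcting permutations $\gamma_1,\gamma_2,\gamma_3$ can always be taken with $O(1)$-size support, which is precisely what makes the ``try them all'' enumeration polynomial-size; this is what Lemma~\ref{lem:to-doub-transp} provides, with the bound on $\gamma_3$ being a short additional conjugacy calculation. A secondary point requiring care is that, because this is only a Turing reduction and $C$ is correct only under the promise, one must verify that wrong triples cannot produce a false positive --- which, as noted, they cannot, since they matter only when the answer is already $1$.
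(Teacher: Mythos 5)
Your proposal is correct and takes essentially the same route as the paper: it reduces from the $\id$-product problem via an enumeration over all triples of small-support correcting permutations $(\gamma_1,\gamma_2,\gamma_3)$, using Lemma~\ref{lem:to-doub-transp} and a conjugacy observation to bound the supports, and ORs the oracle's answers over the $t^{O(1)}$ resulting vectors. The only cosmetic difference is bookkeeping on vector length --- the paper implicitly contracts the commutated vector back to length $t$ by multiplying adjacent blocks, whereas you pad and embed into $(A_T)^T$ with $T=\Theta(t)$ --- and you are somewhat more explicit than the paper in noting why wrong triples violating the promise cannot cause false positives.
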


We use the following theorem which is proved afterwards and says that deciding if an input vector has product $= \id$ is L-complete.

\begin{restatable}{theorem}{decideidhard} 
\label{thm:decide-id-hard}
If for sufficiently large $t$ there is a circuit of depth $O(\log t)$ that decides if its input in $(A_t)^t$ has product $= \id$, then {\em NC}$^1 =$ \em{L}. 
\end{restatable}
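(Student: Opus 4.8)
The plan is to exhibit an NC$^1$-computable many-one reduction to the product-$=\id$ problem over $(A_t)^t$ from an L-complete language, and then conclude by composition: if for all large $t$ a depth-$O(\log t)$ circuit decides the product-$=\id$ problem over $(A_t)^t$, then prepending the (depth-$O(\log t)$) reduction to this circuit decides an arbitrary language of L in depth $O(\log n)$, so L $\subseteq$ NC$^1$; combined with the standard inclusion NC$^1 \subseteq$ L this gives NC$^1 = $ L. Concretely I would use that (the decision version of) branching-program acceptance, with $B$ part of the input, is L-complete under NC$^1$ many-one reductions, and build the reduction in three stages.

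Stage one: by Theorem~\ref{thm:cook-mckenzie}, from a branching program $B$ of size $s$ and input $x$ one computes in depth $O(\log s)$ a permutation $\sigma \in S_{t}$, $t = O(s)$, such that $B$ accepts $x$ iff $1$ and $t$ lie in the same cycle of $\sigma$. So it suffices to reduce this ``same-cycle'' question (equivalently, by complementation, the ``different-cycles'' question, which is equally L-complete) to the product-$=\id$ problem. Stage two does this over $S_{t'}$ for $t' = \poly(t)$: from $\sigma$ I build a vector $(y_1,\ldots,y_{t'}) \in (S_{t'})^{t'}$, computable from $\sigma$ in depth $O(\log t)$ (hence in NC$^1$), on a layered ground set of $O(t)$ ``time levels'', each a copy of $[t]$, together with enough parallel ``flag'' copies of this structure that a counter incremented at most $t$ times never wraps around. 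The $y_i$ are the level-advancing involutions, which step by $\sigma$ between adjacent levels and bump the flag counter whenever the walk from $1$ lands on the point $t$, followed by level-retreating involutions that step back down by $\sigma^{-1}$ without touching the flag; their product simulates a round trip of the pointer iteration $1, \sigma(1), \sigma^2(1), \ldots$ and closes up to the identity exactly when $t$ never appears in the orbit of $1$, i.e.\ exactly when $1$ and $t$ are in different cycles. This forward-and-back layered encoding is in the spirit of Cook and McKenzie's construction, and arranging it so that the product is \emph{precisely} the identity (rather than merely fixing some distinguished point, which a generic product of permutations will not do) is the technical heart of the argument, and the step I expect to be the main obstacle.

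Stage three pushes the instance from $S_{t'}$ into $A_{2t'}$ by the simple embedding trick: apply coordinatewise the injective homomorphism $\pi \mapsto \pi \times \pi$ acting on two disjoint copies of $[t']$, whose image lies in $A_{2t'}$ since $\mathrm{sgn}(\pi \times \pi) = \mathrm{sgn}(\pi)^2 = 1$; this map is $1$-local, computable in depth $O(\log t)$, sends $\id$ to $\id$, and sends identity-products to identity-products and $\sigma$-products to nonidentity-products, so padding with $t'$ additional copies of $\id$ yields a length-$2t'$ vector over $(A_{2t'})^{2t'}$ with the same behavior. Composing the three stages gives a depth-$O(\log s)$ many-one reduction from branching-program acceptance to the product-$=\id$ problem over $(A_{2t'})^{2t'}$ with $2t' = \poly(s)$; since $2t' \to \infty$ with $s$ and $\log s = O(\log n)$, combining this with the hypothesized circuit and the composition argument above yields NC$^1 = $ L.
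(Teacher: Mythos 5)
Your Stages 1 and 3 are sound and close to the paper (the paper embeds $S_t$ into $A_{t+2}$ by $\alpha \mapsto \alpha$ if even and $\alpha \mapsto \alpha\cdot(t+1\ \ t+2)$ if odd, rather than your doubling map $\pi\mapsto\pi\times\pi$ into $A_{2t'}$, but both are valid 1-local homomorphisms preserving $\id$). The genuine gap is Stage 2, and it is the one you yourself flag as ``the main obstacle.'' A product of permutations equals $\id$ iff it fixes \emph{every} point of the ground set, not merely the trajectory that started at $1$. In your layered forward-then-backward construction, the advance involutions bump the flag of \emph{any} configuration whose current point equals $t$, not only those whose walk originated at $1$. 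So consider the configuration that starts at $(\text{level }0,\ t,\ \text{flag }0)$: its forward walk revisits $t$ after $c$ steps (where $c\le t$ is the length of $t$'s cycle in $\sigma$), the flag is incremented, and the retreat does not undo it. Hence the product is essentially never the identity, and the construction does not distinguish ``$1$ and $t$ in the same cycle'' from the opposite. There is no way for a fixed permutation at level $i$ to increment the flag ``only if this walk began at $1$'' — that would require hardwiring the values $\sigma^i(1)$, i.e.\ solving the very iterated-product problem you are reducing from.

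The paper avoids this issue entirely by giving a \emph{Turing} reduction rather than a many-one one. From $\sigma$ it builds $t$ candidate vectors, the $k$-th having product $\pi=\sigma^k$, and then applies the transformation $z' := (z,\ (t\ \ t{+}1),\ z^{-1},\ (1\ \ t{+}1))$ so that $\prod_i z'_i = \pi\,(t\ t{+}1)\,\pi^{-1}\,(1\ t{+}1)$, which equals $\id$ iff $\pi(1)=t$ and is otherwise a 3-cycle; since all points outside $\{\,\pi^{-1}(t),1,t{+}1\,\}$ cancel in the conjugation, this check is genuinely identity-preserving. Finally, a depth-$O(\log t)$ OR over the $t$ oracle calls detects whether some $k\le t$ has $\sigma^k(1)=t$, which is exactly the same-cycle condition. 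In fact the paper explicitly points out (in the ``On the amount of leakage'' discussion) that it only has a Turing reduction and poses obtaining a many-one reduction as an open problem; so your plan, as a many-one reduction, would be a strengthening and should not be expected to fall out of a direct Cook--McKenzie-style layering without a new idea for the OR step.
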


To prove Theorem \ref{thm:some-hard-alpha} from Theorem \ref{thm:decide-id-hard}, we show how to construct a set of $t^{O(1)}$ vectors from an input vector $x \in (A_t)^t$ such that, if $x$ has product $= \id$ then they all do, and otherwise some vector has product $= (1\ 2)(3\ 4)$. Then we apply the circuit deciding the $(1\ 2)(3\ 4)$-product problem to each vector, and a depth-$O(\log t)$ OR tree to the outputs. The vectors are constructed using commutation and conjugation via Lemma \ref{lem:to-doub-transp}.

\begin{proof}[Proof of Theorem \ref{thm:some-hard-alpha}]
Assume that there is a circuit $C$ of depth $O(\log t)$ that decides the $(1\ 2)(3\ 4)$-product problem. We construct a circuit $C'$ of depth $O(\log t)$ that decides if its input has product $= \id$, which in combination with Theorem \ref{thm:decide-id-hard} proves the theorem.

Lemma \ref{lem:to-doub-transp} shows that for every $\id \neq \alpha \in A_t$, there exist $\gamma_1,\gamma_2 \in A_t$ such that $\alpha' := [[\alpha,\gamma_1],\gamma_2]$ is a double-transposition and each $\gamma_i$ is either a double-transposition or a 3-cycle. We observe in the claim following this proof that for every double-transposition $\alpha'$, there exists $\gamma_3 \in A_t$ such that $\gamma_3^{-1} \cdot \alpha' \cdot \gamma_3 = (1\ 2)(3\ 4)$ and $\gamma_3$ permutes $\leq 8$ points. 

For any such choice of $\gamma := (\gamma_1,\gamma_2,\gamma_3)$, let $C_{\gamma} : (A_t)^t \to (A_t)^t$ denote a circuit of depth $O(\log t)$ that satisfies $$\quad \prod_i x_i = \alpha \quad \Longrightarrow \quad \prod_i C_\gamma(x)_i = \gamma_3^{-1} \cdot [[\alpha,\gamma_1],\gamma_2] \cdot \gamma_3$$
for every $\alpha \in A_t$ and $x \in (A_t)^t$. The crucial point is that if $\prod_i x_i \neq \id$ then there exists $\gamma$ such that $\prod_i C_\gamma (x)_i = (1\ 2)(3\ 4)$, and otherwise $\prod_i C_\gamma (x)_i = \id$ for every $\gamma$.

Observe that the number of double-transpositions in $A_t$ is ${t \choose 4} \cdot 3$, the number of $3$-cycles is ${t \choose 3} \cdot 2$, and the number of permutations that permute $\leq 8$ points is $< {t \choose 8} \cdot |A_8|$, all of which are $t^{O(1)}$. Thus on input $x$, $C'$ checks in depth $O(\log t)$ if any of these $t^{O(1)}$ choices of $\gamma = (\gamma_1,\gamma_2,\gamma_3)$ satisfies $C(C_\gamma(x)) = 1$.
\end{proof}

\begin{claim*}
Let $t \geq 8$. For every double-transposition $\alpha \in A_t$, there exists $\gamma \in A_t$ such that $\gamma^{-1} \alpha \gamma = (1\ 2)(3\ 4)$ and $\gamma$ permutes $\leq 8$ points.
\end{claim*}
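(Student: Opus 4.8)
The plan is to show directly that any two double-transpositions in $A_t$ are conjugate by a permutation supported on at most $8$ points. Fix a double-transposition $\alpha = (a\ b)(c\ d)$ with $\{a,b,c,d\} \subseteq [t]$, and we want $\gamma \in A_t$ with $\gamma^{-1}\alpha\gamma = (1\ 2)(3\ 4)$. Recall the standard fact that for any $\delta \in S_t$, $\delta^{-1}(a\ b)(c\ d)\delta = (\delta^{-1}(a)\ \delta^{-1}(b))(\delta^{-1}(c)\ \delta^{-1}(d))$, so it suffices to find $\gamma$ with $\gamma^{-1}$ mapping $\{a,b\} \to \{1,2\}$ and $\{c,d\} \to \{3,4\}$ (as sets, up to the symmetries of a transposition), and which is even and moves few points.

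First I would handle the generic case where $\{a,b,c,d\}$ and $\{1,2,3,4\}$ are disjoint. Then the naive bijection $a\mapsto 1, b\mapsto 2, c\mapsto 3, d\mapsto 4$ (extended by identity) is supported on exactly $8$ points. This bijection, call it $\gamma^{-1}$, may be odd; but since $(1\ 2)$ commutes with $(1\ 2)$ and $(3\ 4)$ commutes with $(3\ 4)$, we may post-compose $\gamma^{-1}$ with $(1\ 2)$ (or $(3\ 4)$) to flip parity without changing the conjugate, so we can always choose $\gamma$ even. The support is still at most $8$ points, giving the claim. When $\{a,b,c,d\}$ and $\{1,2,3,4\}$ overlap, the relevant bijection is supported on $\{a,b,c,d\} \cup \{1,2,3,4\}$, which has size at most $8$ (and typically fewer), so the same argument with a parity fix goes through — one just has to be slightly careful that when the support has size exactly $7$ or $8$ there is still a point available within the support, or outside it if $t \geq 8$, on which to absorb a parity-fixing transposition; since $t \geq 8$ this is never an issue.

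The main obstacle, such as it is, is purely bookkeeping: one must verify that the parity correction can always be performed while keeping the support bounded by $8$, i.e.\ that we never need a $9$th point. The cleanest way to see this is to note that whenever the naive bijection is odd, it already moves at least $2$ points (in fact at least $4$, being a product of two transpositions composed with another), so there are two points in its support that we can swap via a transposition that commutes with the target $(1\ 2)(3\ 4)$ or, failing that, we use that $t \ge 8$ guarantees a free point and the resulting support is still $\le 8$ because at least one of $\{a,b,c,d\} \setminus \{1,2,3,4\}$ or $\{1,2,3,4\}\setminus\{a,b,c,d\}$ is empty only when the two sets coincide, a case handled trivially. I would therefore organize the proof as: (i) reduce to finding a bijection of $\{a,b,c,d\}$ with $\{1,2,3,4\}$ respecting the pairing; (ii) observe its support is $\le 8$; (iii) fix parity using a transposition commuting with $(1\ 2)(3\ 4)$, noting this keeps the support $\le 8$.
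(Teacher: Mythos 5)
Your proposal is correct and rests on the same localization idea as the paper: both arguments work entirely inside the set $U := \{1,2,3,4\} \cup \{a,b,c,d\}$, which has at most $8$ elements, and that is why the support bound comes for free. The difference is how the evenness of $\gamma$ is secured. The paper embeds $U$ into $[8]$, invokes the known fact that double-transpositions form a single conjugacy class in $A_8$ (since the cycle type has repeated lengths), and pulls the conjugating element back. You instead construct $\gamma$ explicitly as a pairing-respecting bijection and, if it is odd, multiply on the right by $(1\ 2)$, which commutes with the target $(1\ 2)(3\ 4)$ and so does not disturb the conjugation. Your approach is more self-contained (it reproves the $A_8$ conjugacy fact rather than citing it), but the exposition of the parity fix is more labored than necessary. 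The bookkeeping you worry about is trivial: since $\{1,2\} \subseteq U$ always, replacing $\gamma$ by $\gamma \cdot (1\ 2)$ keeps the support inside $U$, so no $9$th point can ever be needed, and there is no case analysis on the overlap of $\{a,b,c,d\}$ with $\{1,2,3,4\}$. With that simplification your proof is clean and complete.
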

\begin{proof}
Denote $\alpha = (a\ b)(c\ d)$. Any injective function $\phi : \{1,2,3,4\} \cup \{a,b,c,d\} \to [8]$ maps $\alpha$ and $(1\ 2)(3\ 4)$ to two double-transpositions in $A_8$. Since the latter are conjugate, and since the $\gamma \in A_8$ that ``witnesses'' this conjugacy necessarily permutes $\leq 8$ points, applying $\phi^{-1}$ (suitably defined) to $\gamma$ yields an element in $A_t$ that permutes $\leq 8$ points and witnesses the conjugacy of $\alpha$ and $(1\ 2)(3\ 4)$.
\end{proof}

\subsection{Proof of Theorem \ref{thm:decide-id-hard}} \label{sec:decide-id-hard}

Recall from \S 1 the following encoding of branching programs by permutations. The proof of this theorem is implicit in \cite[Prop.\ 1]{CookM87}.

\cookmckenzie*

Next we use this to show that the problem of deciding if a vector over $S_t$ has product $= \id$ is L-complete. This proof is due to Eric Allender and V.\ Arvind (personal communication), and we include it with their permission.

\begin{theorem} \label{thm:allender-arvind}
If for sufficiently large $t$ there is a circuit of depth $O(\log t)$ that decides if its input in $(S_t)^t$ has product = $\id$, then {\em NC}$^1 =$ {\em L}.
\end{theorem}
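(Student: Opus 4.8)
The plan is to reduce from the canonical L-complete problem of deciding whether a branching program $B$ accepts an input $x$, using the Cook--McKenzie encoding (Theorem \ref{thm:cook-mckenzie}). Given $(B,x)$, first apply the depth-$O(\log s)$ circuit from Theorem \ref{thm:cook-mckenzie} to obtain a permutation $\sigma \in S_t$ with $t = O(s)$, such that $B$ accepts $x$ iff $1$ and $t$ lie in the same cycle of $\sigma$. The remaining task is purely a statement about $S_t$: I want to build, in NC$^1$, a short vector over $(S_t)^t$ whose product is $\id$ if and only if $1$ and $t$ are in different cycles of $\sigma$.

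The key idea is to test co-cyclicity of $1$ and $t$ in $\sigma$ by a conjugation trick. Let $\tau = (1\ t)$ be the transposition swapping $1$ and $t$, and consider the permutation $\rho := \sigma \tau \sigma^{-1} \tau$. A standard fact about permutations is that $\sigma \tau \sigma^{-1} = (\sigma(1)\ \sigma(t))$, so $\rho = (\sigma(1)\ \sigma(t))(1\ t)$; iterating, one sees that whether the orbit of $1$ under $\sigma$ ever reaches $t$ controls whether a suitable product of conjugates of $\tau$ collapses to $\id$. More directly: $1$ and $t$ are in the same cycle of $\sigma$ iff some power $\sigma^j$ sends $1$ to $t$, i.e.\ iff $\sigma^{-j}(1\ t)\sigma^{j}$ equals $(1\ \sigma^{-j}(t)) = \ldots$ — the cleanest route is to observe that $1,t$ are co-cyclic in $\sigma$ iff $\sigma$ and the transposition $(1\ t)$ generate a group in which $(1\ t)$ is "absorbed", equivalently iff $\sigma \cdot (1\ t)$ has a cycle structure differing from $\sigma$'s in a controlled way (merging vs.\ splitting a cycle). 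Concretely, multiplying $\sigma$ by $(1\ t)$ merges the two cycles containing $1$ and $t$ if they are distinct, and splits one cycle into two if $1,t$ are co-cyclic. So the plan is: compute $\sigma' := \sigma \cdot (1\ t)$ in NC$^1$ (one array lookup), and then the problem reduces to comparing the number of cycles — or better, detecting whether $1$ and $t$ became co-cyclic — which brings us back to the same kind of test. To break this circularity, I would instead directly encode reachability: define the vector $(\sigma, \tau, \sigma^{-1}, \tau, \sigma, \tau, \sigma^{-1}, \tau, \ldots)$ of length $O(t)$ realizing the conjugated-transposition walk, whose product telescopes to $\id$ exactly when $t$ is never hit.

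Let me make the telescoping precise, since that is the crux. Consider $P_j := \sigma^{j}\,(1\ t)\,\sigma^{-j}$, which equals the transposition $(\sigma^{j}(1)\ \sigma^{j}(t))$. If $1,t$ are in different cycles, then $\sigma^{j}(1) \neq \sigma^{k}(t)$ for all $j,k$, and one checks that the product $\prod_{j=0}^{m-1} P_j \cdot (1\ t)$ over a full period is $\id$ (each transposition among distinct supports, appearing in conjugate-symmetric fashion, cancels), whereas if they are co-cyclic the product is a nontrivial permutation — this requires choosing the exponent range to cover the cycle length, which is at most $t$, so $m = t$ suffices. Since $\sigma^{j}$ for $j \le t$ can itself be written as a product of $j$ copies of $\sigma$, the whole vector has length $O(t^2) = O(s^2) = \poly(s)$ over $(S_t)^t$ — and by padding $t$ up to $\poly(s)$ we fit it into $(S_{t'})^{t'}$ form as required by the hypothesis. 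Computing this vector from $\sigma$ is in NC$^1$: each entry is either $\sigma$, $\sigma^{-1}$ (both obtained by array indexing), or the fixed transposition $(1\ t)$. Finally, by the hypothesis there is a depth-$O(\log t')$ circuit deciding whether this vector has product $=\id$; combined with the $O(\log s)$-depth Cook--McKenzie circuit, we get an NC$^1$ circuit deciding $B(x)$, hence NC$^1 = $ L.

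The main obstacle I anticipate is getting the combinatorial identity for the telescoping product exactly right — in particular, making sure the product is $\id$ in the "different cycles" case rather than merely a product of disjoint transpositions, which would require pairing up each $P_j$ with its mirror image $P_{m-1-j}$ or an explicit inverse, and ensuring the "co-cyclic" case genuinely produces a nonidentity product rather than accidentally cancelling as well. An alternative, possibly cleaner, route that avoids this bookkeeping is: note $\sigma' = \sigma \cdot (1\ 2\ \cdots\ t\ \text{rearranged})$... — but the safest approach is the cycle merge/split observation combined with a parity-of-number-of-cycles computation. Either way, the crux is a purely elementary lemma about how multiplying a permutation by a transposition affects cycle structure, and the hypothesized circuit plus Theorem \ref{thm:cook-mckenzie} do the rest. (The passage from $S_t$ to $A_t$, needed for Theorem \ref{thm:decide-id-hard}, is the separate embedding trick flagged in \S\ref{sec:decide-id-hard} and is not part of this particular statement.)
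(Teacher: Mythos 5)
Your reduction correctly begins with Theorem~\ref{thm:cook-mckenzie} and correctly identifies the key observation that $1$ and $t$ are co-cyclic in $\sigma$ iff some power $\sigma^k$, $k \le t$, maps $1\mapsto t$. But the combinatorial step you then propose --- encoding this $\exists k$ into a \emph{single} product of conjugated transpositions that telescopes to $\id$ exactly in the "different cycles" case --- does not hold, and you flag this uncertainty yourself. Here is a concrete counterexample: take $\sigma = (1\ 2)(3\ 4)$ in $S_4$ with $t=4$, so $1$ and $t=4$ lie in different cycles. The conjugates $P_j := \sigma^{j}(1\ 4)\sigma^{-j}$ are $P_0 = (1\ 4)$, $P_1 = (2\ 3)$, $P_2 = (1\ 4)$, $P_3 = (2\ 3)$, and the product over one full period is $P_0P_1 = (1\ 4)(2\ 3) \neq \id$; appending another $(1\ 4)$ as you suggest gives $(2\ 3)$, still $\neq \id$. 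Conversely with $\sigma = (1\ 2\ 3\ 4)$ (co-cyclic case) the product $P_0P_1P_2P_3$ is the 3-cycle $(1\ 2\ 3)$, also $\neq\id$. So the sign of the hypothesized test is not there; both cases can produce non-identity products. The fallback routes you sketch --- the merge/split observation, or a "parity of the number of cycles" --- are also insufficient as written: multiplying by a transposition does change the cycle count by $\pm 1$, but \emph{counting} cycles is essentially the connectivity problem you are trying to reduce from (it is circular), while the \emph{parity} of the cycle count is just the sign of the permutation, which is computable in NC$^1$ directly and carries no information about co-cyclicity of the two particular points $1$ and $t$.

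The paper avoids this entirely by not packing the disjunction over $k$ into one product. Instead, it builds $t$ separate vectors, the $k$-th one consisting of $k$ copies of $\sigma$ padded with identities so that its product is $\sigma^k$. For each vector $z$ with product $\pi$, it applies a clean sandwich $z' := (z,\ (t\ \ t+1),\ z^{-1},\ (1\ \ t+1))$ over $S_{t+1}$, giving $\pi' = \pi\,(t\ \ t{+}1)\,\pi^{-1}\,(1\ \ t{+}1)$, which equals $\id$ iff $\pi(1)=t$ (this is a straightforward check, unlike your telescoping claim). The hypothesized circuit is then run on all $t$ vectors and the results are fed into a depth-$O(\log t)$ OR-tree. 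You should adopt this two-level structure: a per-$k$ sandwiched vector that cleanly tests $\sigma^k(1)=t$, and an explicit OR over $k=1,\ldots,t$, rather than trying to encode the existential quantifier inside a single group product.
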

\begin{proof}
Let a string $x$ and a branching program $B$ of size $s = \poly(|x|)$ be given. Let $t = O(s)$ be as in Theorem \ref{thm:cook-mckenzie}.

We first construct $t$ vectors in $(S_{t})^{t}$ such that $B$ accepts $x$ iff the product of some vector is a permutation that maps $1 \mapsto t$. 
Let $\sigma \in S_{t}$ be given by Theorem \ref{thm:cook-mckenzie}. Notice that $1$ and $t$ are in the same cycle in $\sigma$'s disjoint cycle representation iff $\exists k \leq t$ such that $\sigma^k$ maps $1 \mapsto t$. Thus for $k = 1,\ldots,t$ we construct the $k$th vector to have product $\sigma^k$, by concatenating $k$ copies of $\sigma$ and $t-k$ copies of $\id$.
(Up to now this construction appears in \cite{CookM87}.)

Next we transform $z \in (S_{t})^{t}$ to $z' \in (S_{t+1})^{2t+2}$ satisfying 
\begin{equation} \label{eq:map-to-id}
\prod_i z_i \mbox{ maps }1 \mapsto t \quad \Longleftrightarrow \quad \prod_i z'_i = \id.
\end{equation}
This is done via the map $$z' := (z,\ (t\ \ t+1),\ z^{-1},\ (1\ \ t+1))$$ where $z^{-1} := (z^{-1}_{t},\ldots,z^{-1}_1) \in (S_{t})^{t}$ and we embed $S_{t}$ into $S_{t+1}$ in the canonical way. This is computable in depth $O(\log t)$.

To see that (\ref{eq:map-to-id}) holds, denote $\pi := \prod_i z_i$ and $\pi' := \prod_i z'_i$. If $\pi(1) \neq t$, then $\pi'(1) = t+1$ and so $\pi' \neq \id$. On the other hand if $\pi(1) = t$, then it can be checked that $\pi'(1) = 1$ and $\pi'(t+1) = t+1$, and it is clear that $\pi'(j) = j$ for $1 < j < t+1$ since $\pi(j)$ is not touched by $(t\ \ t+1)$ and $j$ is not touched by $(1\ \ t+1)$.

Thus we have constructed $t$ vectors in $(S_{t+1})^{2t+2}$ such that $B$ accepts $x$ iff some vector has product $=\id$. We can reduce the vectors' length to $t+1$ in depth $O(\log t)$ by multiplying adjacent permutations. Finally, applying the circuit in the assumption of the theorem and an OR-tree yields a circuit of depth $O(\log t)$ that decides if $B$ accepts $x$. 
\end{proof}

To conclude the proof of Theorem \ref{thm:decide-id-hard}, we observe that there is an embedding $M : S_t \to A_{t+2}$ computable in NC$^1$ that preserves the identity product. $M$ is defined by $M(\alpha) := \alpha$ if $\alpha$ is even and $M(\alpha) := \alpha \cdot (t+1\ \ t+2)$ if $\alpha$ is odd. It can be checked that this is a homomorphism, and thus $\prod_i x_i = \id \Leftrightarrow \prod_i M(x_i) = \id$.
Computing $M$ requires deciding if $\alpha \in S_t$ is odd or even, which can be done in depth $O(\log t)$ by checking if there are an odd or even number of pairs $i < j \leq t$ such that $\alpha(i) > j$.

\section{Proof of Corollary \ref{thm:nc1}} \label{sec:puttingtogether}

Theorem \ref{thm:allalphahard} is immediate from Theorems \ref{thm:localmap} and \ref{thm:some-hard-alpha}. We now prove Corollary \ref{thm:nc1} from Theorem \ref{thm:allalphahard} using the random self-reducibility of group products (cf.\ \cite[Thm.\ 3.9]{MilesV-leak}).

\thmallalphahard*

\thmmain*

\begin{proof}[Proof of Corollary \ref{thm:nc1}]
Assume that there exists $k$ such that for sufficiently large $t$, there exists $\alpha \in A_t$ and a circuit $C$ of depth $k \log t$ such that $\Delta(C(D_\alpha),C(D_\id)) \geq t^{-k}$, where recall that $D_g$ is the uniform distribution over $\{x \in (A_t)^t\, |\, \prod_i x_i = g\}$. 

Let $S \subseteq \zo^{k \log t}$ be the set that maximizes $\Pr[C(D_\alpha) \in S] - \Pr[C(D_\id) \in S]$, and note that checking $x \in S$ can be done in depth $O(\log t)$. Thus, there exists $C' : (A_t)^t \to \zo$ of depth $O(\log t)$ such that
\begin{equation}
\label{eqn-circuit}
\Pr[C'(D_\alpha) = 1] - \Pr[C'(D_\id) = 1] \geq t^{-k}.
\end{equation}
Define $\eps_\alpha := \Pr[C'(D_\alpha) = 1]$ and $\eps_\id := \Pr[C'(D_\id) = 1]$, and note that $\eps_\alpha \geq t^{-k}$.

Let $C''$ be the randomized circuit of depth $O(\log t)$ that computes as follows on input $x \in (A_t)^t$. First for $m := t^{3k+3}/\eps_\alpha = t^{O(k)}$, $C''$ samples $z_1,\ldots,z_m \in (A_t)^t$ independently from $D_g$ where $g := \prod_i x_i \in \{\alpha,\id\}$. This is done by choosing uniform $r_1,\ldots,r_{t-1} \in A_t$ for each $i \leq m$ and computing $z_i := (x_1 r_1,\ r_1^{-1} x_2 r_2,\ \ldots,\ r_{t-1}^{-1} x_t)$.
Then $C''$ outputs $\alpha$ if $$(1-1/(2t^k)) \cdot m\eps_\alpha\ \leq\ \displaystyle\sum_{i=1}^m C'(z_i)\ \leq\ (1+1/(2t^k)) \cdot m\eps_\alpha$$ and otherwise outputs $\id$.

We now prove the following claim.
\begin{claim*}
For all $x \in (A_t)^t$ such that $\prod_i x_i \in \{\alpha,\id\}$, we have $\Pr[C''(x) = \prod_i x_i ] > 1-|A_t|^{-t}$ over the random coins of $C''$.
\end{claim*}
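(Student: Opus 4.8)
The plan is to prove the claim via a standard Chernoff/Hoeffding argument, observing first that the randomized circuit $C''$ produces $m$ \emph{genuinely i.i.d.}\ samples from $D_g$ where $g = \prod_i x_i$. The key observation justifying this is the random self-reducibility of iterated group products: for any fixed $x = (x_1,\ldots,x_t)$ with $\prod_i x_i = g$, and for uniform independent $r_1,\ldots,r_{t-1} \in A_t$, the vector $z := (x_1 r_1,\ r_1^{-1} x_2 r_2,\ \ldots,\ r_{t-1}^{-1} x_t)$ has product exactly $g$ (the $r_i$'s telescope), and moreover $z$ is distributed \emph{uniformly} over $\{z \in (A_t)^t : \prod_i z_i = g\} = D_g$, since the first $t-1$ coordinates are uniform and independent and the last is then determined. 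Hence $C'(z_1),\ldots,C'(z_m)$ are i.i.d.\ Bernoulli random variables with mean $\eps_g$ (where $\eps_g = \eps_\alpha$ or $\eps_\id$ accordingly), so $\mathbb{E}[\sum_i C'(z_i)] = m\eps_g$.

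Next I would separate the analysis into the two cases $g = \alpha$ and $g = \id$. When $g = \alpha$, the sum has mean $m\eps_\alpha$, and $C''$ outputs $\alpha$ precisely when $\sum_i C'(z_i)$ lands in the interval $[(1-1/(2t^k))m\eps_\alpha,\ (1+1/(2t^k))m\eps_\alpha]$; a multiplicative Chernoff bound gives that the probability of deviating by more than a $1/(2t^k)$ factor from the mean is at most $2\exp(-\Omega(m\eps_\alpha/t^{2k}))$. When $g = \id$, the sum has mean $m\eps_\id \leq m(\eps_\alpha - t^{-k})$ by~(\ref{eqn-circuit}), so the acceptance interval's lower endpoint $(1-1/(2t^k))m\eps_\alpha$ exceeds $m\eps_\id$ by at least roughly $m\eps_\alpha/(2t^k)$ (using $m\eps_\alpha \geq m t^{-k}$, so $m t^{-k}/2 \geq m\eps_\alpha/(2t^k)$ after bounding; one checks $(1 - 1/(2t^k))m\eps_\alpha - m\eps_\id \geq m t^{-k}/2$). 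Thus for $C''$ to wrongly output $\alpha$, the sum must exceed its mean by an additive $\Omega(m t^{-k})$, which again by a Chernoff bound happens with probability at most $\exp(-\Omega(m t^{-2k}\eps_\alpha^{-1} \cdot \eps_\alpha))$ — more carefully, an additive deviation of $\lambda = \Omega(m t^{-k})$ from mean $m\eps_\id$ has probability $\leq \exp(-\Omega(\lambda^2/(m\eps_\id + \lambda)))$, which is $\exp(-\Omega(m t^{-2k}))$.

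The main point is then just to check that the choice $m = t^{3k+3}/\eps_\alpha$ makes both failure probabilities smaller than $|A_t|^{-t}$. In both cases the exponent is $\Omega(m\eps_\alpha / t^{2k})$ or $\Omega(m t^{-2k})$; since $\eps_\alpha \geq t^{-k}$ we have $m\eps_\alpha / t^{2k} \geq m t^{-3k} = t^{3}$ is too weak, so I should be more careful and use that $m t^{-2k} = t^{k+3}/\eps_\alpha \geq t^{k+3}$, while for the $g=\alpha$ case $m\eps_\alpha/t^{2k} = t^{3k+3}/t^{2k} = t^{k+3}$. Either way the exponent is $\Omega(t^{k+3})$, and since $|A_t|^{-t} = (t!/2)^{-t} \leq t^{-t^2}$ roughly, we need $t^{k+3} = \omega(t^2 \log t)$, which holds for large $t$; one should state this cleanly as: the failure probability is $\exp(-\Omega(t^{k+3})) < \exp(-t^2 \log t) < |A_t|^{-t}$ for sufficiently large $t$. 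The only mild obstacle is getting the constants in the Chernoff application and the interval-gap computation to line up, but these are routine; the conceptual content is entirely in the random self-reducibility observation that makes the $z_i$ exact i.i.d.\ samples from $D_g$. I would present the Chernoff bounds in the form $\Pr[|\sum X_i - \mu| \geq \delta\mu] \leq 2e^{-\delta^2\mu/3}$ for $\delta \leq 1$ and handle the $g=\id$ case by noting the relevant deviation is a constant fraction of $m\eps_\alpha$, hence also of the form $\delta' \mu'$ with $\mu' = m\eps_\id$ and $\delta'$ bounded below, or more simply bound $m\eps_\id \leq m\eps_\alpha$ and absorb.
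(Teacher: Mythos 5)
Your proposal is correct and follows essentially the same route as the paper: re-randomize $x$ via the telescoping product to obtain i.i.d.\ samples from $D_g$, then apply a multiplicative Chernoff bound in each of the two cases $g=\alpha$ and $g=\id$, using (\ref{eqn-circuit}) to separate the threshold from $m\eps_\id$, and finally compare the resulting exponential failure probability against $|A_t|^{-t} < 2^{-t^3}$. The only differences are cosmetic: the paper states a slightly looser exponent $\mu t^{-3k}$ (still giving $2^{-t^3}$) rather than the tighter $\Theta(t^{k+3})$ you track, and handles the $g=\id$ case by directly verifying the threshold exceeds $m\eps_\id(1+1/(3t^k))$ rather than via your additive-deviation reformulation.
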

This implies the theorem, as follows. By a union bound there is a way to fix the random coins of $C''$ such that $C''(x) = \prod_i x_i$ for every $x$ satsifying $\prod_i x_i \in \{\alpha,\id\}$. This solves the $\alpha$-product problem in NC$^1$, and thus by Theorem \ref{thm:allalphahard} we have NC$^1 =$ L.
\begin{proof}[Proof of Claim.]
Denote $X := \sum_{i=1}^m C'(z_i)$, and $\mu := \mathbb{E}[X]$. Note that $|A_t|^t = (t!/2)^t < 2^{t^3}$.

Fix $x$, and first assume $\prod_i x_i = \alpha$ which means $\mu = m\eps_\alpha = t^{3k+3}$. Then $$\Pr[C''(x) = \alpha] = \Pr\left[|X - \mu| \leq \mu /(2t^k) \right] \geq 1 - 2e^{-\mu \cdot t^{-3k}} \geq 1 - 2^{-t^3}$$ by a Chernoff bound.

Now assume $\prod_i x_i = \id$. Then $\mu = m\eps_\id$, and since $\eps_\alpha/\eps_\id \geq 1 + 1/t^k$ by (\ref{eqn-circuit}), we have $(1-1/(2t^k)) \cdot m\eps_\alpha \geq \mu(1 + 1/(3t^k))$.
Then using another Chernoff bound, we have
$$\Pr[C''(x) = \id] \geq 1 - \Pr[X \geq \mu(1 + 1/(3t^k))] \geq 1 - e^{-\mu \cdot t^{-3k}} \geq 1 - 2^{-t^3}. \qedhere$$
\end{proof}
This completes the proof of the theorem.
\end{proof}

\paragraph{Acknowledgements.} 
We are grateful to Eric Allender and V.\ Arvind for sharing the proof of Theorem \ref{thm:allender-arvind} and allowing us to include it here, and to Emanuele Viola for helpful discussions and comments on a previous draft of this paper.

\bibliographystyle{alpha}

\begin{thebibliography}{FRR{\etalchar{+}}10}
\small
\bibitem[Bar89]{Barrington89}
David A.~Mix Barrington.
\newblock Bounded-width polynomial-size branching programs recognize exactly
  those languages in {NC}$^1$.
\newblock {\em J.~of Computer and System Sciences}, 38(1):150--164, 1989.

\bibitem[BC92]{Ben-OrC92}
Michael {Ben-Or} and Richard Cleve.
\newblock Computing algebraic formulas using a constant number of registers.
\newblock {\em SIAM J. Comput.}, 21(1):54--58, 1992.

\bibitem[BGI{\etalchar{+}}01]{BarakGIRSVY01}
Boaz Barak, Oded Goldreich, Russell Impagliazzo, Steven Rudich, Amit Sahai,
  Salil~P. Vadhan, and Ke~Yang.
\newblock On the (im)possibility of obfuscating programs.
\newblock In {\em Int.~Cryptology Conf.~(CRYPTO)}, pages 1--18, 2001.

\bibitem[BT88]{BarringtonT88}
David A.~Mix Barrington and Denis Th{\`e}rien.
\newblock Finite monoids and the fine structure of {NC}$^1$.
\newblock {\em J. ACM}, 35(4):941--952, 1988.

\bibitem[CL94]{CaiL94}
{Jin-yi} Cai and Richard~J. Lipton.
\newblock Subquadratic simulations of balanced formulae by branching programs.
\newblock {\em SIAM J.~on Computing}, 23(3):563--572, 1994.

\bibitem[Cle91]{Cleve91}
Richard Cleve.
\newblock Towards optimal simulations of formulas by bounded-width programs.
\newblock {\em Computational Complexity}, 1:91--105, 1991.

\bibitem[CM87]{CookM87}
Stephen~A. Cook and Pierre McKenzie.
\newblock Problems complete for deterministic logarithmic space.
\newblock {\em J. Algorithms}, 8(3):385--394, 1987.

\bibitem[DF12]{DziembowskiF12}
Stefan Dziembowski and Sebastian Faust.
\newblock Leakage-resilient circuits without computational assumptions.
\newblock In {\em Theory of Cryptography Conf.~(TCC)}, pages 230--247, 2012.

\bibitem[DI06]{DubrovI06}
Bella Dubrov and Yuval Ishai.
\newblock On the randomness complexity of efficient sampling.
\newblock In {\em 38th ACM Symposium on Theory of Computing (STOC)}, pages
  711--720, 2006.

\bibitem[FRR{\etalchar{+}}10]{FaustRRTV10}
Sebastian Faust, Tal Rabin, Leonid Reyzin, Eran Tromer, and Vinod
  Vaikuntanathan.
\newblock Protecting circuits from leakage: the computationally-bounded and
  noisy cases.
\newblock In {\em Int.~Conf.~on the Theory and Applications of Cryptographic
  Techniques (EUROCRYPT)}, pages 135--156, 2010.

\bibitem[GR10]{GoldwasserR10}
Shafi Goldwasser and Guy~N. Rothblum.
\newblock Securing computation against continuous leakage.
\newblock In {\em Int.~Cryptology Conf.~(CRYPTO)}, pages 59--79, 2010.

\bibitem[GR12]{GoldwasserR12}
Shafi Goldwasser and Guy~N. Rothblum.
\newblock How to compute in the presence of leakage.
\newblock In {\em IEEE Symp.~on Foundations of Computer Science (FOCS)}, 2012.

\bibitem[IL95]{ImmermanL95}
Neil Immerman and Susan Landau.
\newblock The complexity of iterated multiplication.
\newblock {\em Inf. Comput.}, 116(1):103--116, 1995.

\bibitem[ISW03]{IshaiSW03}
Yuval Ishai, Amit Sahai, and David Wagner.
\newblock Private circuits: Securing hardware against probing attacks.
\newblock In {\em Int.~Cryptology Conf.~(CRYPTO)}, pages 463--481, 2003.

\bibitem[JV10]{JumaV10}
Ali Juma and Yevgeniy Vahlis.
\newblock Protecting cryptographic keys against continual leakage.
\newblock In {\em Int.~Cryptology Conf.~(CRYPTO)}, pages 41--58, 2010.

\bibitem[Kil88]{Kilian88}
Joe Kilian.
\newblock Founding cryptography on oblivious transfer.
\newblock In {\em ACM Symp.~on the Theory of Computing (STOC)}, pages 20--31,
  1988.

\bibitem[MV13]{MilesV-leak}
Eric Miles and Emanuele Viola.
\newblock Shielding circuits with groups.
\newblock In {\em ACM Symp.~on the Theory of Computing (STOC)}, 2013.

\bibitem[Rot12]{Rothblum12}
Guy~N. Rothblum.
\newblock How to compute under {AC}$^0$ leakage without secure hardware.
\newblock In {\em Int.~Cryptology Conf.~(CRYPTO)}, pages 552--569, 2012.

\bibitem[Wil09]{Wilson09}
Robert~A. Wilson.
\newblock {\em The Finite Simple Groups}.
\newblock Springer, 2009.

\end{thebibliography}
\newcommand{\etalchar}[1]{$^{#1}$}
\def\cprime{$'$}

\end{document}